\documentclass[runningheads]{llncs}
\usepackage[T1]{fontenc}

\usepackage[hidelinks]{hyperref}
\hypersetup{bookmarksnumbered=true,bookmarksopenlevel=2}

\usepackage{graphicx}
\usepackage{color}

\usepackage{enumitem}
\spnewtheorem{observation}{Observation}{\itshape}{\rmfamily}
\spnewtheorem*{proofsketch}{Proof Sketch}{\itshape}{\rmfamily}

\setcounter{tocdepth}{2}
\usepackage{thm-restate}
\usepackage{amsmath,amssymb}
\usepackage[capitalize]{cleveref}
\crefname{observation}{Observation}{Observations}
\crefname{problem}{Problem}{Problems}

\newcommand{\N}{\mathbb{N}}
\newcommand{\exponent}{\mathsf{exponent}}
\newcommand{\mantissa}{\mathsf{mantissa}}
\newcommand{\abs}[1]{\left\vert#1\right\vert}

\newcommand{\target}{Y}

\newcommand{\period}{\ensuremath{P}}
\newcommand{\round}[1]{\left[#1\right]}
\newcommand{\ceil}[1]{\left\lceil#1\right\rceil}
\newcommand{\st}{\text{ s.t. }}
\newcommand{\Sme}{S_{me}}
\newcommand{\Sfeeder}{S_{F}}

\usepackage{scalerel}
\usepackage{tikz}
\usetikzlibrary{svg.path}

\definecolor{orcidlogocol}{HTML}{A6CE39}
\tikzset{
  orcidlogo/.pic={
    \fill[orcidlogocol] svg{M256,128c0,70.7-57.3,128-128,128C57.3,256,0,198.7,0,128C0,57.3,57.3,0,128,0C198.7,0,256,57.3,256,128z};
    \fill[white] svg{M86.3,186.2H70.9V79.1h15.4v48.4V186.2z}
                 svg{M108.9,79.1h41.6c39.6,0,57,28.3,57,53.6c0,27.5-21.5,53.6-56.8,53.6h-41.8V79.1z M124.3,172.4h24.5c34.9,0,42.9-26.5,42.9-39.7c0-21.5-13.7-39.7-43.7-39.7h-23.7V172.4z}
                 svg{M88.7,56.8c0,5.5-4.5,10.1-10.1,10.1c-5.6,0-10.1-4.6-10.1-10.1c0-5.6,4.5-10.1,10.1-10.1C84.2,46.7,88.7,51.3,88.7,56.8z};
  }
}

\renewcommand\orcidID[1]{\href{https://orcid.org/#1}{\mbox{\scalerel*{
\begin{tikzpicture}[yscale=-1,transform shape]
\pic{orcidlogo};
\end{tikzpicture}
}{|}}}}

\begin{document}
\title{\texorpdfstring{Model Checking Linear Dynamical Systems\\ under Floating-point Rounding}{Model Checking Linear Dynamical Systems under Floating-point Rounding}
}

\author{Engel Lefaucheux\inst{1}\orcidID{0000-0003-0875-300X} \and
Jo\"el Ouaknine\inst{2}\orcidID{0000-0003-0031-9356} \and
David Purser\inst{3,4}\orcidID{0000-0003-0394-1634} \and
Mohammadamin Sharifi\inst{5}\orcidID{0000-0002-1987-9487}}
\authorrunning{E. Lefaucheux et al.}

\institute{
University of Lorraine, CNRS, Inria, LORIA, Nancy, France  \email{engel.lefaucheux@inria.fr}
\and
Max Planck Institute for Software Systems, Saarland Informatics Campus, Saarbrücken, Germany \email{joel@mpi-sws.org} \and
University of Warsaw, Warsaw, Poland \and
University of Liverpool, Liverpool, UK \email{D.Purser@liverpool.ac.uk} \and
Sharif University of Technology, Tehran, Iran \email{sharifim689@gmail.com}
}

\maketitle              %

\begin{abstract}
We consider linear dynamical systems under floating-point rounding. 
In these systems, a matrix is repeatedly applied to a vector, but the numbers are rounded into floating-point representation after each step (i.e., stored as a fixed-precision mantissa and an exponent). The approach more faithfully models realistic implementations of linear loops, compared to the exact arbitrary-precision setting often employed in the study of linear dynamical systems.

Our results are twofold: We show that for non-negative matrices there is a special structure to the sequence of vectors generated by the system: the mantissas are periodic and the exponents grow linearly. We leverage this to show decidability of $\omega$-regular temporal model checking against semialgebraic predicates. This contrasts with the unrounded setting, where even the non-negative case encompasses the long-standing open Skolem and Positivity problems.

On the other hand, when negative numbers are allowed in the matrix, we show that the reachability problem is undecidable by encoding a two-counter machine. Again, this is in contrast with the unrounded setting where point-to-point reachability is known to be decidable in polynomial time. 
\keywords{Model Checking  \and Floating-point \and Dynamical Systems.}

\end{abstract} 

\section{Introduction}

Loops are a fundamental staple of any programming language, and the study of loops plays a pivotal role in many subfields of computer science, including automated verification, abstract interpretation, program analysis, semantics, etc. 
The focus of the present paper is on the algorithmic analysis of
simple (i.e., non-nested) linear (or affine) while loops, such as the following:
\newpage\begin{verbatim}
x = 3, y = 4, z = 2
while x+3y+z > 4:
    x = 3x +2z
    y = 3x + y
    z = y + z
\end{verbatim}

We are interested in analysing how the loop evolves. A simple reachability query is to decide whether the loop variables ever satisfy a Boolean combination of polynomial inequalities, for example modelling a loop guard.
More generally, one might seek to consider significantly more complex temporal properties, such as those expressible in linear temporal logic or monadic second-order logic: this gives rise to a model-checking problem.

Modelling the evolution of such a loop may require unbounded memory.
That is, the number of bits needed to represent the numbers $x$, $y$, and $z$ may grow 
larger and larger. 
However, most computer systems do not represent rational numbers to arbitrary precision, but rather use \emph{floating-point rounding}, in which a number $y$ is stored using two components: the mantissa $m\in\mathbb{Q}$ 
and the exponent $\alpha\in\mathbb{Z}$, such that $y= m \cdot 10^\alpha$.\footnote{%
We work in base 10 throughout for simplicity of exposition. 
All our results carry over \emph{mutatis mutandis} in any integer base, including base 2 as typically used in practice.} 

Typically floating-point numbers are specified using either 32 or 64 bits, with some of these reserved for the mantissa and some for the exponent, thus bounding both the mantissa and the exponent. 
\textbf{We do not do this}, and only place a bound on the number of bits representing the 
mantissa, allowing the exponent to grow unboundedly (in either direction). From a theoretical standpoint,
bounding the number of bits of both the 
mantissa and the exponent would necessarily give rise to a finite-state system, for which essentially any decision problem would become decidable (at least in principle, if not necessarily in practice). 
Due to the unboundedness of exponents in our setting, we do not have to consider overflows (`NaN', `infinity' or `-infinity' which are part of most floating-point specifications).

Formally, we model our programs using linear dynamical systems (LDS), which comprise
a starting vector representing the initial state of each variable
and a matrix describing the evolution of the program.
An LDS generates an infinite sequence of vectors (the \emph{orbit} of the system) by
multiplying the matrix with the current 
vector and then applying floating-point rounding to the result.

\subsection*{Our results}

We consider the \emph{model-checking} problem for linear dynamical systems evolving under floating-point rounding. More formally, let $\target_1,\dots, \target_k \subseteq \mathbb{R}^d$ be semialgebraic targets.
Given an orbit $(x^{(t)})_{t\in \N}$, we define the characteristic word 
$w = w_1,w_2,w_3,\dots$ with respect to $\target_1,\dots,\target_k$ over alphabet $2^{\{1,\dots, k\}}$ such that $i \in w_t$ if and only if $x^{(t)} \in Y_i$. The model-checking problem asks whether $w$ is in an $\omega$-regular language, or equivalently satisfies a temporal specification given in monadic second-order logic (MSO).

Our results show that analysing LDS under floating-point rounding is neither clearly easier nor harder than in the standard setting (without rounding).  Our first contribution establishes \emph{undecidability} of point-to-point reachability (and \emph{a fortiori} model checking) under floating-point rounding, a surprising outcome given that point-to-point reachability is solvable
in polynomial time without rounding~\cite{KannanL86}. On the other hand,
in the standard setting neither decidability nor undecidability are known for full model checking (although mathematical hardness results exist); see~\cite{DBLP:conf/soda/OuaknineW14,KarimovLOPVWW22,DBLP:conf/birthday/KarimovKO022}. 

\begin{restatable}{theorem}{thmundec}\label{thm:undec}
The floating-point point-to-point reachability problem is undecidable.
\end{restatable}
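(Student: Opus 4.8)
The plan is to reduce from the halting problem for two-counter (Minsky) machines, which is undecidable. Given a two-counter machine $\mathcal{M}$ with the usual instruction set (increment $c_i$; decrement $c_i$ if positive; test $c_i$ for zero and branch, decrementing on the non-zero branch; halt), I would construct a dimension $d$ (linear in the size of $\mathcal{M}$), a rational matrix $A$, a rational initial vector $x^{(0)}$, a precision $p$, and a target vector $\target$, so that the $p$-digit-rounded orbit of $(A,x^{(0)})$ hits $\target$ if and only if $\mathcal{M}$ halts. A configuration $(q,c_1,c_2)$ is encoded by a vector in which the control location is recorded by a one-hot block over the (micro-)instructions, and each counter value $c_i$ is stored, \emph{attached to the active location}, as the exactly representable number $10^{-c_i}$ — so ``counter $=0$'' is the normalized value $1.0\cdot 10^0$, larger counters sit at ever smaller exponents, and the mantissa is always exactly $1$. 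Keeping the counters attached to the active location (rather than in fixed coordinates) is essential: each instruction's update is then a constant matrix block acting on the active block, which automatically ``fires'' only for the current location, so that incrementing or decrementing a counter is realized by the fixed coefficients $10^{-1}$ resp.\ $10$, with no data-dependent multiplication.

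Next I would realize each instruction in a bounded number of steps of the rounded LDS. Increments, decrements, and plain jumps are routine: copy the block of the current location into the block of its successor, scaling the relevant counter coordinate by $10^{\mp 1}$, and zero out the old block; every quantity involved is a clean power of ten, so the rounding step is a no-op on what matters. The interesting case is the zero test, where the dynamics must route the configuration to one of two successor locations according to whether the tested coordinate equals $1.0\cdot 10^0$ or lies at exponent $\le -1$. Here I would exploit the one genuinely nonlinear feature available: adding a small number to a large one and then rounding either preserves it or silently discards it, with a fixed-width window set by $p$. Concretely, rounding $10^{p-1}+10^{-c}$ to $p$ digits returns $10^{p-1}+1$ when $c=0$ but exactly $10^{p-1}$ when $c\ge 1$; subtracting the reference $10^{p-1}$ yields a clean bit $b\in\{\,1.0\cdot 10^0,\ 0\,\}$ equal to $1$ precisely when the tested counter is zero. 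From $b$ and the counter coordinate $v$ one reads off $b$ itself (the correct counter value $10^0$ on the ``zero'' branch) and $10\cdot(v-b)$ (the decremented value $10^{-(c-1)}$ on the ``non-zero'' branch), each of which vanishes in the other case, and routes them to the two successor blocks. I would also use the rounding-cell behaviour — e.g.\ that rounding $10^{p-1}+\tfrac25(x+y)$ returns $10^{p-1}+1$ iff $x+y=2$ — to compute the Boolean combinations of indicator bits needed to steer the one-hot control block.

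I expect the main obstacle to be carrying the \emph{other}, untested counter through a zero test while keeping the control block exactly one-hot. That counter's coordinate ranges over an unbounded set of exponents, so it cannot be ``masked'' by a bit in a single absorption step; the gadget must therefore be arranged (with a few scratch coordinates and micro-steps, maintaining a canonical copy attached to whichever successor actually becomes active) so that at no point are two control blocks simultaneously nonzero and so that no stray value ever survives into a later configuration. Two further points must be discharged throughout: (i) \emph{mantissa hygiene} — every quantity re-read in a later step must be an exactly representable power of ten (possibly times a small fixed integer), so that repeated rounding never corrupts the encoding; and (ii) \emph{no spurious hits} — $\target$ is the encoding of the halting configuration, and injectivity of the encoding together with faithfulness of the simulation must ensure the orbit reaches $\target$ exactly when $\mathcal{M}$ halts.

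Faithfulness in both directions then yields undecidability of floating-point point-to-point reachability. This immediately gives undecidability of model checking as well: take the single semialgebraic target $\target_1=\{\target\}$ and the $\omega$-regular (indeed $\mathrm{LTL}$) property ``eventually in $\target_1$'', whose truth on the characteristic word is precisely reachability of $\target$.
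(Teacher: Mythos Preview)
Your overall plan matches the paper's: reduce from two-counter-machine halting, attach per-location copies of the counters stored as powers of $10$, and exploit absorption under rounding to realise the zero test. The superficial differences---you use $10^{-c}$ with general $p$, the paper uses $10^{c}$ with $p=1$ and takes the zero vector as target---are harmless.

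The genuine gap is exactly where you flag it. You correctly isolate routing the \emph{untested} counter through a branch as the crux, but then write only that ``the gadget must therefore be arranged \ldots\ with a few scratch coordinates and micro-steps'' without supplying one; a bit $b\in\{0,1\}$ cannot linearly mask a value $u=10^{-c_2}$ of unbounded exponent (that would be the bilinear product $b\cdot u$), and you give no multi-step substitute. The paper closes this with one idea you are missing: it carries, per location $\ell_j$, an auxiliary $a_j=10^{x+y}$ alongside $x_j=10^{x}$ and $y_j=10^{y}$, so that ``$x=0$?'' becomes the \emph{magnitude comparison} ``is $y_i\ge a_i$?'' (equality iff $x=0$). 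It then builds, for $p=1$, rounded-linear comparators $\operatorname{filter}_{\pm}(u,v)$ returning $v$ or $0$ according to $v\ge u$ (resp.\ $v<u$) via the cascade $temp\leftarrow[u+v]$, $temp2\leftarrow[temp-u]$, $w\leftarrow[1.1\cdot temp2]$; routing the untested counter is then simply $y_j\leftarrow\operatorname{filter}_{+}(a_i,y_i)$ and $y_k\leftarrow\operatorname{filter}_{-}(a_i,y_i)$, and analogously for $x$ and $a$. Your bit-based route can in fact be completed as well---for instance by an absorb-then-cancel pair $w\leftarrow[u+L\bar b]$, $w'\leftarrow[w-L\bar b]$ with $L$ a fixed large constant coordinate---but no such construction appears in the proposal, so as written the reduction is incomplete at its central step.
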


However, for non-negative matrices, we show that the full MSO model-checking problem is decidable in our setting, without restrictions on the dimensions of the predicates or the ambient space. 
This is in stark contrast to the standard setting, where assuming non-negativity does
not simplify the problem.
Model checking non-negative LDS without rounding would
require (at a minimum) solving the longstanding open Skolem and Positivity problems~\cite{AkshayAOW15}.

\begin{restatable}{theorem}{thmmodelchecking}\label{thm:modelcheckingdec}
Let $(M,x)$ be a non-negative linear dynamical system, let $\target_1,\dots,\target_k$ be semialgebraic targets and let $\phi$ be an MSO formula using predicates over $\target_1,\dots, \target_k   $. It is decidable whether the characteristic word under floating-point rounding satisfies $\phi$.
\end{restatable}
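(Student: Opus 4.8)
The plan is to use the periodicity phenomenon announced above to reduce the problem to a lasso check. Concretely, for a non-negative system the orbit eventually obeys an extremely rigid rule — every $\period$ steps each coordinate gets multiplied by a fixed power of $10$ — and from this I will show that the characteristic word is ultimately periodic and explicitly computable, so that testing it against an MSO sentence (equivalently, an $\omega$-regular property) is decidable.

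The first and main step is to establish the structural claim in the following usable form: there are computable $T_0,\period\in\N$ and $\delta\in\mathbb{Z}^d$ with $x^{(t+\period)}_i = x^{(t)}_i\cdot 10^{\delta_i}$ for all $t\ge T_0$ and all coordinates $i$. Since $M$ is non-negative the orbit stays non-negative, and rescaling a vector by a power of $10$ merely shifts exponents and leaves every mantissa — hence also the rounding operator — unchanged, so the orbit's mantissa profile is invariant under such rescalings. Using Perron--Frobenius theory I would group the coordinates by their asymptotic growth rate under $M$: within a group, contributions coming from strictly slower-growing groups are eventually rounded away, so each group updates autonomously, and its state — the tuple of mantissas together with the (bounded) exponents taken relative to the group maximum — ranges over a finite set under a deterministic map. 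A finite deterministic system is ultimately periodic, and taking a common period over the groups yields the displayed form. Establishing this — in particular controlling the absorption of lower-order terms and excluding any aperiodic drift of the relative exponents — is the crux of the argument and the step I expect to be hardest.

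Granting this, the rest is routine. For $t\ge T_0$ write $t = T_0 + r + n\period$ with $0\le r<\period$; then $x^{(t)} = \pbr{v^{(r)}_1 10^{n\delta_1},\dots,v^{(r)}_d 10^{n\delta_d}}$ where $v^{(r)} = x^{(T_0+r)}\in\mathbb{Q}^d$. Fix a target $\target_i$, written as a Boolean combination of polynomial inequalities $P(x)\bowtie 0$; substituting this form of $x^{(t)}$ gives $P\pbr{v^{(r)}_1 10^{n\delta_1},\dots} = \sum_{k} A_k\,10^{nk}$, a finite sum over $k\in\mathbb{Z}$ whose coefficients $A_k$ are algebraic numbers computable from $v^{(r)},\delta,P$. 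Multiplying by a sufficiently large power of $10^n$ does not change signs and turns this into a univariate polynomial in $10^n$, whose sign for $n\ge N$ equals the sign of its leading coefficient; both $N$ and that sign are computable (using effective lower bounds on nonzero algebraic numbers), and for the finitely many $n<N$ the sign is decided directly. Hence for each of the finitely many pairs $(r,i)$ the truth value of $x^{(T_0+r+n\period)}\in\target_i$ is ultimately constant in $n$, with computable threshold and value; combining these and computing the finitely many symbols of the word before $T_0$, the characteristic word is ultimately periodic, $w = u\,z^\omega$, with $u,z$ explicit finite words. Finally, whether $u\,z^\omega$ satisfies a given MSO sentence is decidable — translate it to a Büchi automaton and look for an accepting run on the lasso — which proves the theorem.
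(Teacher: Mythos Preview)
Your overall architecture matches the paper's: establish effective pseudo-periodicity of the rounded orbit (your displayed relation is exactly \cref{thm:main:pseudo-periodic}), deduce that each polynomial sign is ultimately periodic along residue classes, conclude the characteristic word is a computable lasso, and check the MSO sentence on it. The downstream steps --- the polynomial-in-$10^n$ sign analysis and the B\"uchi check --- are correct and essentially what the paper does in \cref{sec:decidablemodelchecking}.

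Where your sketch falls short is the pseudo-periodicity argument itself, and the gap is not merely one of detail. Grouping by Perron--Frobenius growth rate and asserting ``slower groups are rounded away, so each group updates autonomously with bounded relative exponents'' does not carry the load. The relevant growth rate is that of the \emph{rounded} orbit (an integer exponent increment per period), which need not be read off from the spectrum of $M$; using the unrounded rates to define the groups presupposes the conclusion. More importantly, when an SCC $\Sme$ is fed by some $\Sfeeder$ with the same asymptotic rate, the relative exponents between $\Sfeeder$ and $\Sme$ are \emph{not} a priori bounded, so your finite-state pigeonhole does not apply to the whole group. The paper resolves this by an induction up the SCC DAG (after first blowing up by the lcm of simple-cycle lengths to make every SCC aperiodic, which you also need and do not mention): top SCCs are pseudo-periodic by exactly your finiteness argument (\cref{lem:effective-top}); for a lower SCC one proves a dichotomy --- either $\Sfeeder$ influences $\Sme$ only finitely often (then $\Sme$ eventually runs as a top SCC), or infinitely often (then $\Sfeeder$ and $\Sme$ are infinitely often exponent-close and a \emph{joint} pigeonhole over both gives joint pseudo-periodicity, \cref{lemma:infcase}). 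Deciding which case holds, and effectively bounding when, is done by comparing $\Sfeeder$'s already-computed growth rate to the rate $\Sme$ would have in isolation (\cref{lemma:finitecase}). This feeder/fed case analysis, not ``absorption of lower-order terms'', is the technical core that your plan is missing.
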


We place no dimension restriction on the predicates; in particular, showing that the Skolem and Positivity problems are \emph{decidable} on non-negative systems under floating-point rounding. At this time we do not however have complexity upper bounds on our model-checking algorithm, or lower bounds on the model-checking problem.

\subsection*{Related work}
There is a line of practical tools for the analysis, verification, and invariant synthesis for floating-point loops~\cite{BeckerPDT18,LoharJSDC21,AbbasiSDUA21,maurica2017optimal}. These tools typically work well in practice, but do not necessarily work in all cases. The analysis of concrete implementations of floating-point specifications requires careful analysis of edge cases around $\pm \infty$ and `NaN'.
In contrast to these tools which focus primarily on practical analysis, our work seeks to understand the theoretical possibilities and limitations of the exact analysis of (possibly long-running) floating-point loops in a generalised setting.

The study of linear dynamical systems explores the sequence of vectors induced by a matrix. 
Model checking is only known to be decidable for certain classes of semialgebraic predicates---in particular those with low dimension~\cite{KarimovLOPVWW22} or for prefix-independent properties~\cite{AlmagorKKO021}; see also~\cite{DBLP:conf/birthday/KarimovKO022}. 
The well-known Skolem and Positivity problems being special cases of model checking,
they place technical limits on the dimensions that can be handled without first resolving 
long-standing open cases of these problems. Recent progress suggests that the Skolem problem may be yet be conquered, at least for diagonalisable matrices~\cite{BiluLNOPW22,LucaOW22}, but Positivity requires solving 
particularly difficult problems in analytic number theory~\cite{DBLP:conf/soda/OuaknineW14,ChonevOW16}.
The non-negative case can be used to model sequences of distributions induced by Markov chains~\cite{baierconcur22}, although all hardness limitations apply already in the probabilistic setting~\cite{AkshayAOW15}.

Baier et al.~\cite{Baier0JKLOPPW20} consider LDS under rounding to fixed-decimal precision, showing reachability is PSPACE-complete for hyperbolic systems (when no eigenvalue has modulus one) and decidable for certain other constrained classes of rounding. A notable difference of fixed-decimal precision is that it cannot allow arbitrarily small numbers, unlike the floating-point numbers we consider.

A recent line of work focusses on linear dynamical systems with perturbations at every step, with a view to understanding the robustness of reachability problems~\cite{DCostaKMOSS021,DCostaKMOSW22,0001BGV22}. However, unlike rounding, the perturbation is chosen in order to assist hitting the target and the perturbation is arbitrarily small. 

For linear while loops the reachability problem can be rephrased as a halting problem, asking whether a guard condition is eventually met from a given initial state. The related termination problem asks whether a guard condition is met from \emph{every} initial state~\cite{Tiwari04,Braverman06}. Issues arising from implementations using floating-point representations to solve the termination problem of unrounded (arbitrary precision) loops are considered in~\cite{XiaYZZ11}. In contrast, we are interested in analysing programs in which the intended behaviour is to round the numbers to fixed-precision floating-point numbers at every step of the loop.

\paragraph*{Organisation}
In Section~\ref{sec:prelim}, we formalise the 
model and problems and discuss some of the properties of floating-point 
rounding. In Section~\ref{sec:undec}, we present our undecidability result for the general case.
Finally, in Section~\ref{sec:posdec} we establish some special periodic structure associated with the orbit and use this structure in \cref{sec:decidablemodelchecking} to show that model checking is decidable for non-negative LDS.

\section{Preliminaries}
\label{sec:prelim}

\subsection{Linear dynamical systems and rounding functions}

\begin{definition}
A $d$-dimensional linear dynamical system (LDS) $(M,x)$ 
comprises a matrix $M\in\mathbb{Q}^{d\times d}$ and an initial 
vector $x\in\mathbb{Q}^d$. 

Given a rounding function $[\cdot]: \mathbb{Q}^d \to \mathbb{Q}^d$, 
and an LDS $(M,x)$ the rounded orbit $\mathcal{O}$  is the sequence $(x^{(t)})_{t\in \N}$ such that $x^{(0)}=[x]$ and  $x^{(t)} = [M x^{(t-1)}]$ for all $t\geq 1$. 
\end{definition}

Given $p\in \mathbb{N}$, we say that a number $x$ is a floating-point number with precision $p$ if $x =m \cdot 10^\alpha$ such that $m\in \mathbb{Q}$ is a decimal number in $\{0\}\cup[0.1,1)$ with $p$ digits in the fractional part (after the decimal point) and $\alpha \in \mathbb{Z}$. In particular, we associate by convention the number with mantissa $m=0$ to the exponent $-\infty$.
Given a number $x = m\cdot 10^\alpha$ we define $\mantissa(x) = m$ and $\exponent(x) = \alpha$.

We are interested in the floating-point rounding function $[\cdot]$ with precision $p\in\N$.
Given a real number $x\in \mathbb{R}$, we define $[x]$, the floating-point rounding of $x$, as the closest floating-point number with precision $p$ based on the first $p+1$ digits of $x$. 

Where there are two possible choices, any deterministic choice that is consistent with the properties listed below is acceptable.\footnote{For example, always rounding up, always rounding down, round to even, rounding towards zero, rounding away from zero are acceptable, providing the choice is fixed.}
We denote by $\mathbb{FP}_{10}[p]$ the subset of $\mathbb{Q}$ representable in base $10$ as a floating-point numbers with $p$ digits. 
We use the following useful properties of the rounding function:

\begin{itemize}
\item it is \emph{log-bounded}, \emph{i.e.}\ there exists a constant $c \in \mathbb{R}_+$ 
such that $\forall x \in \mathbb{R},\frac{|x|}{c} \leq |[x]|\leq c |x|.$
\item it is \emph{mantissa-based}, \emph{i.e.}\ if $x =10^\alpha x'$, then 
$[x] = 10^\alpha [x']$.
\item it is \emph{$(p+1)$-finite}, \emph{i.e.}\
the output of the rounding is not dependent on the $i$-th digit of the mantissa, for each integer $i > p+1$. In other words, if $x$ and $x'$ agree on the first $p+1$ digits 
then $[x] = [x']$.
\item it is \emph{sign preserving}, \emph{i.e.}\ $\operatorname{sign}(x) = \operatorname{sign}([x])$. The fact that $[x] = 0 $ if and only if $x = 0$ also follows from the log-bounded property.
\end{itemize}

The floating-point rounding is defined above on a single real. It is extended
straightforwardly to a vector $x$ by applying it to each of its components $(x)_i$ where
$i$ ranges from $1$ to the dimension of the vector. 
As such, the term $[Mx]$ is obtained by first computing exactly the the vector
$Mx$ and then by rounding each component $(Mx)_i$.
An alternative approach could be to maintain each sub-computation in $p$-bits
of precision, \emph{but this is not the approach we take}. Such an orbit can be simulated in our setting by increasing the dimension so that operations can be staggered in a way that at most one operation (scalar product or variable addition) is used in each assignment.

\subsection{Model checking}

We consider the model-checking problem of an LDS over semialgebraic sets.

\begin{definition}
A semialgebraic set $Y\subseteq \mathbb{R}^d$ is defined by a finite Boolean combination of polynomial inequalities.
\end{definition}

Let $(M,x)$ be an LDS with rounded orbit $\mathcal{O}$ and $\mathcal{\target} = \{\target_1,\dots, \target_k\}$ be a 
collection of semialgebraic sets. 
The characteristic word of $\mathcal{O}$ is $w =w_1w_2w_3\ldots\in (2^{\{1,\dots,k\}})^\omega$, such that 
$j \in w_t$ if and only if $x^{(t)} \in \target_j$.

The model-checking problem asks whether the characteristic word is contained within a given $\omega$-regular language, usually specified in a temporal logic such as monadic second order logic (MSO), or often its LTL fragment. Without loss of generality we assume that the property is given as a B\"uchi automaton~\cite{buchi1990decision}.

\begin{problem}[Floating-point Model-checking Problem]
\label{problem:modelchecking}
Given an LDS $(M,x)$ with rounded orbit $\mathcal{O}$,
a collection of semialgebraic sets $\mathcal{\target} = \{\target_1,\dots, \target_k\}$ and an $\omega$-regular specification $\phi$, the model-checking problem consists in deciding whether the characteristic word $w$ of $\mathcal{O}$  satisfies the specification $\phi$.
\end{problem}

We will also consider the point-to-point reachability problem, which is a subcase of the model-checking
problem (\cref{problem:modelchecking}):

\begin{problem}[Floating-point Point-to-point Reachability Problem]\label{problem:reach}
Given a $d$-dimensional LDS  $(M,x)$, and a target vector $y\in \mathbb{Q}^d$, the point-to-point reachability problem consists in deciding whether
$y$ belongs to the rounded orbit $\mathcal{O}$.
\end{problem}

Given a target $\target\subseteq \mathbb{R}^d$, we associate the set of hitting times $\mathcal{Z}(\target) = \{t \mid x^{(t)}\in \target\}$. Under this formulation, the reachability problem is reformulated as whether $\mathcal{Z}(\target)$ is empty. However, for model checking we will develop a more comprehensive understanding of the hitting times of each target $\target_1,\dots, \target_k$.

\subsection{Structure of \texorpdfstring{$M$}{M}}

Formally, $M$ is a $d$-dimensional matrix  indexed by the elements $\{1,\dots,d\}$.
However, we interpret $M$ as an automaton over states $Q =\{q_1,\dots, q_d\}$ and reference the entries of $M$ by pairs of states. That is, we refer to $M_{q_1,q_2}$ rather than $M_{1,2}$.

We denote by $G_M$ the weighted directed graph whose adjacency matrix
is $M$. That is, a graph with vertices $Q$ and with an edge from $q_j$ to $q_i$ weighted by $M_{q_i,q_j}$ if $M_{q_i,q_j} \ne 0$.\footnote{Note that the orientation of the edge may appear switched from the reader's expectation. This is due to the convention that $M$ is pre-multiplied with $x$ at every step.}

Let $S_1, \cdots, S_s\subseteq Q$ be the strongly connected components (SCCs) of $G_M$. Our analysis will consider each strongly connected component separately, thus it will often be useful to consider the entries of $x \in \mathbb{FP}_{10}[p]^{Q}$ corresponding only to one strongly connected component. 
Without loss of generality, by reordering the states where necessary, we assume that the states in $Q$ are ordered so that states within the same SCC appear next to one another, and the strongly connected components are topologically sorted, \emph{i.e.}\ there is no edge from 
$S_i$ to $S_j$ where $i > j$.
We split a vector $x$ into $s$ smaller vectors, denoted $x_{S_1},\dots, x_{S_s}$, each representing the entries of $x$ corresponding to the SCC\@. Letting $x_{S_j} = (z_{1, j}, \cdots, z_{d_j, j})^T$ and $|S_j| = d_j$, we thus have  $x$ is partitioned as
\[
x = (z_{1, 1} \cdots  z_{d_1, 1} , \cdots  , z_{1, s} \cdots z_{d_s, s} )^T.
\]

Moreover, for each pair of SCCs $S_i,S_j$, we denote by $M_{S_i,S_j}$ the submatrix of $M$ restricted to the rows related to $S_i$ and columns related to $S_j$, which is a matrix with $d_i$ rows and $d_j$ columns. If $S_i=S_j$, we simply write $M_{S_i}$. In other words, $M_{S_i,S_j}$ is the matrix that shows the dependency between $S_i$ and $S_j$, and we have
\[M = 
\begin{pmatrix}
M_{S_1} & M_{S_1, S_2} & \cdots & M_{S_1, S_s} \\
M_{S_2, S_1} & M_{S_2} & \cdots & M_{S_2, S_s} \\
\vdots & \vdots & \ddots & \vdots \\
M_{S_s, S_1} & M_{S_s, S_2} & \cdots & M_{S_s}
\end{pmatrix}
\]

We say $S_i$ \emph{feeds} $S_j$, and $S_j$ is \emph{fed by} $S_i$ if there is some edge in $G_M$ from some state in $S_i$ to some state in $S_j$.

\section{Undecidability of point-to-point reachability}
\label{sec:undec}

In this section, we give a sketch of the proof of the undecidability of
\cref{problem:reach} (and thus of \cref{problem:modelchecking}) in the general case. 
The full proof is postponed to \cref{sec:app_undec}.

\thmundec*

This result is obtained by reduction from the termination of a two-counter Minsky machine. 
We recall the definition of this model:
\begin{restatable}{definition}{definitionminski}
A two-counter Minsky machine is defined by a finite set of states
$\ell_1,\dots,\ell_m$, a distinguished starting state
(w.l.o.g. $\ell_1$), a distinguished halting state
(w.l.o.g. $\ell_m$), two natural integer counters, here denoted as $x$
and $y$, and a mapping deterministically associating to each state
transition a particular action.

Each transition takes one of the following forms: for $z\in \{x,y\}$,
\begin{description}
\item[increment] $\operatorname{inc}_z(\ell_j)$: add 1 to counter $z$, move to state $\ell_j$.
\item[decrement] $\operatorname{dec}_z(\ell_j)$: remove 1 from counter $z$ if $z>0$, move to state $\ell_j$.
\item[zero test] $\operatorname{zero?}_z(\ell_j,\ell_k)$: if $z=0$ move to state $\ell_j$ else move to state $\ell_k$.
\end{description}

The configuration of a two-counter Minsky machine consists of the current state and the values of $x$ and $y$. 
\end{restatable}

Without loss of generality (by first using a zero test), one can assume a decrementation 
operation is never used in a configuration where the counter to be decreased has
value $0$, hence removing the need to check whether $z>0$.
 
The halting problem asks whether, starting in configuration $(\ell_1,0,0)$, that is, in the distinguished starting state with both counters set to $0$, whether the state 
$\ell_m$ is reached. The problem is undecidable~\cite{minsky1967computation}.

We build an LDS with mantissa length $p=1$ and base $10$ that simulates a run of a given Minsky machine. 
The reduction happens to maintain the invariant that each mantissa always has the value $0$ or $1$ after 
rounding (although, as we operate in base 10, there are 10 possible values the mantissa could have taken).
For ease of readability, we describe this LDS using variables to represent the dimensions
and linear functions to represent the transition matrix.
For each state of the Minsky machine, we use two variables corresponding to the two counters. 
Throughout the simulation, if the Minsky machine is in state $j$, the counter values are stored in the exponents of the variables 
associated with state $j$, and all other variables are zero.

The crux of our reduction lies in the handling of the zero test. More precisely, 
suppose we need to branch depending on whether $x$ is equal to $0$, then we need to define linear 
transitions that transfer the values of the two counters from one pair of variables 
to the appropriate new pair of variables.
This is done using filter functions: 
the function $\operatorname{filter}_+(u,v)$ (resp. $\operatorname{filter}_-(u,v)$) is equal to $v$ if $v\geq u$ (resp. $v<u$) and to $0$ otherwise.
We end this sketch with the construction of these functions and proof
that they operate as advertised.

\begin{restatable}{lemma}{lemfilter}
\label{lem:filter}
Given $u,v$ of the form $10^c$ with $c \in \N$,
one can compute the value $w=\operatorname{filter}_+(u,v)$ in three linear operations
with floating-point rounding.
\end{restatable}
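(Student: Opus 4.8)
The goal is to build $\operatorname{filter}_+(u,v)$, which returns $v$ when $v\ge u$ and $0$ when $v<u$, given that both $u=10^a$ and $v=10^b$ are powers of ten with non-negative integer exponents, using only three rounded linear operations (with mantissa precision $p=1$, base $10$). The key mechanism I would exploit is that under floating-point rounding with a single mantissa digit, adding a "large" power of ten to a "small" one \emph{absorbs} the small one entirely, whereas adding two comparable powers keeps information — and crucially the sign-preserving and log-bounded properties let us test the outcome against a threshold.

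The plan is as follows. First, form a linear combination of $u$ and $v$ designed so that the rounded result falls on one side of a fixed power of ten exactly when $v\ge u$. Concretely, compute something like $w_1 = [\,c_1 u - c_2 v\,]$ (one rounded operation) for suitable constants: if $v\ge u$ then $v$ dominates and, because precision is $1$, the term $c_1 u$ is rounded away, leaving $w_1$ a negative multiple of $v$ (so $w_1<0$, in fact $w_1$ is again of the form $-10^{b'}$ or at least has $\exponent(w_1)$ tracking $\exponent(v)$); if $v<u$ then $c_1u$ dominates and $w_1>0$ with exponent tracking $\exponent(u)$. The sign-preserving property guarantees the sign of $w_1$ records which case we are in without any loss. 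Second, I would use a rounded operation to "clamp" this sign information to a $0/1$-type indicator: since we know the magnitude of $w_1$ is at least $10^0$ in the relevant case and we control the constants, one more rounded linear step of the form $[\,d_1 w_1 + d_2\,]$ or $[d_1 w_1 + d_2 u]$ can be tuned so that the output is $0$ in one case (by absorption/cancellation down past the mantissa digit) and a clean power of ten in the other. Third, multiply this indicator by $v$ (rolled into the same linear step if possible, otherwise the third operation) so that the final value is exactly $v$ when $v\ge u$ and exactly $0$ otherwise. The mantissa-based property is what lets us reason about these steps "exponent by exponent", and the $p{+}1$-finite property ensures that only the leading two digits of each intermediate matter, so the absorption arguments are exact.

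The main obstacle will be the bookkeeping of constants and a careful case split on the \emph{gap} $b-a$ between the exponents: the behaviour of $[c_1 u - c_2 v]$ is different when $v=u$, when $v=10u$ (just one order larger), and when $v\ge 100u$, because with a single mantissa digit a difference of one order of magnitude is the boundary between "fully absorbed" and "partially visible". I would need to choose $c_1,c_2$ (and the later constants) so that the sign flips precisely at $b=a$ and remains correct across all larger gaps, handling the $v=u$ tie and the $v$ one-order-above-$u$ case explicitly. A secondary subtlety is guaranteeing that after step one the nonzero outputs are still exact powers of ten (or at least have a predictable single-digit mantissa), so that the subsequent rounded operations behave as designed; here the assumption $c\in\N$ (so $u,v\ge 1$, no tiny exponents) and the freedom to pick any consistent deterministic tie-break both help. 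Once the constants are fixed, verifying the three cases is a routine finite computation, so the real content is the design of the gadget rather than the arithmetic. The construction of $\operatorname{filter}_-$ is then symmetric (swap the roles / negate), and composing the two filters with the state-transfer wiring described above completes the Minsky-machine simulation and hence Theorem~\ref{thm:undec}.
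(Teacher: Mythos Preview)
Your plan has a structural gap in step three, where you propose to ``multiply this indicator by $v$''. In a linear dynamical system every update is a linear combination of the current variables with \emph{fixed} rational coefficients; a product of two computed quantities (an indicator times $v$) is simply not available. A $0/1$-style indicator therefore cannot be converted into $v$-or-$0$ by any single linear step, and ``rolling it into the same linear step'' does not help unless what you call the indicator is already a fixed scalar multiple of $v$ --- which your description does not arrange. There is a second, related issue in step two: you hope to obtain an exact $0$ ``by absorption/cancellation down past the mantissa digit'', but the rounding function is sign-preserving, so $[x]=0$ only when $x=0$ exactly. That forces $d_1 w_1 + d_2 u = 0$ on the nose, hence $w_1$ must be the \emph{same} fixed multiple of $u$ across every $v<u$ subcase (including the boundary $v=u/10$, where $[c_1u-c_2v]$ typically lands on a different mantissa). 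Your sketch does not secure this, and the alternative $[d_1 w_1 + d_2]$ with an absolute constant $d_2$ cannot work at all because $|w_1|$ is unbounded.

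The paper avoids both difficulties with a direct gadget that never forms an indicator: $temp\leftarrow[u+v]$, then $temp2\leftarrow[temp-u]$, then $w\leftarrow[1.1\cdot temp2]$. Addition absorbs the smaller power into the larger (or yields $2u$ when $u=v$); subtracting $u$ then gives exactly $0$ when $u$ dominated and approximately $v$ otherwise; the factor $1.1$ repairs the single edge case $v=10u$, where $temp2$ lands on $9\cdot10^{b-1}$ instead of $10^b$. A short case split on $c_1=c_2$, $c_1>c_2$, $c_2=c_1+1$, and $c_2>c_1+1$ completes the verification. Your sign-based intuition can in fact be salvaged (for instance $w_1=[2u-v]$, $w_2=[w_1-2u]$, $w=[-w_2]$ works), but only after dropping the indicator-times-$v$ step and carrying out precisely the boundary-case bookkeeping you defer; as written, the proposal does not yet get there.
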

\begin{proof}
We compute $w=\operatorname{filter}_+(u,v)$ in three successive operations 
using two temporary variables, $temp$ and $temp2$, initially set at $0$ (recall, rounding is applied after each step):\\
\begin{tabular}{p{1cm}ll}
&$temp$ & $\leftarrow u + v$\\
&$temp2$ & $\leftarrow temp - u$\\
&$w$ & $\leftarrow 1.1 * temp2$
\end{tabular}

\noindent
Let $c_1,c_2\in \N$ such that $u=10^{c_1}$ and $v=10^{c_2}$. Recall that the notation $[\cdot]$ is the floating-point rounding function.

\noindent First observe that if $c_1=c_2$:\\
\begin{tabular}{p{1cm}ll}
&$temp$ & $\leftarrow [10^{c_1} + 10^{c_2}] = 2\cdot 10^{c_1}$\\
&$temp2$ & $\leftarrow [2\cdot10^{c_1} - 10^{c_1}] = 10^{c_1} (= v)$\\
&$w$ & $\leftarrow [1.1\cdot10^{c_1}] = 10^{c_1} = v\quad \text{ as required.}$
\end{tabular}

\noindent Secondly, assume that $u > v$, and thus $c_1> c_2$:\\
\begin{tabular}{p{1cm}ll}
&$temp$ & $\leftarrow [10^{c_1} + 10^{c_2}] =  10^{c_1} = u$\\
&$temp2$ & $\leftarrow [ 10^{c_1} - 10^{c_1}] = 0$\\
&$w$ & $\leftarrow [1.1\cdot 0 ] =0\quad \text{ as required.}$
\end{tabular}

\noindent We split the case that $v > u$, thus $c_2 > c_1$, into two cases. Suppose $c_2 > c_1 + 1$:\\
\begin{tabular}{p{1cm}ll}
&$temp$ & $\leftarrow [10^{c_1} + 10^{c_2}] =  10^{c_2} = v$\\
&$temp2$ & $\leftarrow [ 10^{c_2} - 10^{c_1}] = [0.\underbrace{99\dots99}_{c_2-c_1\ge 2} \cdot 10^{c_2}] = 1\cdot 10^{c_2} = v$\\
&$w$ & $\leftarrow [1.1\cdot 10^{c_2} ] =10^{c_2} = v\quad \text{ as required.}$\\
\end{tabular}

\noindent Finally, $c_2 = c_1 +1$:\\
\begin{tabular}{p{1cm}ll}
&$temp$ & $\leftarrow [10^{c_1} + 10^{c_2}] =  10^{c_2} = v$\\
&$temp2$ & $\leftarrow [ 10^{c_2} - 10^{c_1}] = [0.9 \cdot 10^{c_2}] = 9\cdot 10^{c_2-1}$\\
&$w$ & $\leftarrow [1.1\cdot 9\cdot 10^{c_2-1} ] =[9.9\cdot 10^{c_2-1}] = 10\cdot10^{c_2-1}= 10^{c_2} = v\qquad\quad$\\
&& \hfill$\text{ as required.}$\qed
\end{tabular}
\end{proof}

\begin{restatable}{corollary}{corfilter}
\label{cor:filter}
Given $u,v$ of the form $10^c$ with $c \in \N$, 
one can compute the value $w=\operatorname{filter}_-(u,v)$ in four linear operations
with floating-point rounding.
\end{restatable}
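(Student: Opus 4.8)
The plan is to reduce $\operatorname{filter}_-$ to $\operatorname{filter}_+$ using the observation that for inputs $u,v$ of the form $10^c$, the two filters are complementary in the sense that exactly one of $v \geq u$ and $v < u$ holds, so $\operatorname{filter}_-(u,v) = v - \operatorname{filter}_+(u,v)$ whenever $v\ge u$ gives $\operatorname{filter}_+ = v$ and $0$ otherwise. Concretely, I would first invoke \cref{lem:filter} to compute $w' = \operatorname{filter}_+(u,v)$ in three linear operations with floating-point rounding, reusing its guarantee that the result is exactly $v$ when $v \ge u$ and exactly $0$ when $v < u$. Then a single further linear operation $w \leftarrow v - w'$ produces the desired value: when $v < u$ we get $[v - 0] = v$, and when $v \ge u$ we get $[v - v] = 0$, which is exactly $\operatorname{filter}_-(u,v)$. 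That is four linear operations in total.

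The one subtlety to check is that the subtraction $v - w'$ really is an admissible linear operation in the same sense as the steps in \cref{lem:filter}: the coefficients involved are $\pm 1$ (and we are reusing an existing variable holding $v$ together with the variable holding $w'$), so no constant outside the mantissa-length-$1$ regime is needed, unlike the $1.1$ coefficient in the lemma, which the lemma already showed how to simulate. Since both $v$ and $w'$ are of the form $10^c$ (or $0$), their difference is either $0$ or again a clean power of $10$, so rounding acts as the identity on it and the computation is exact — there is no analogue of the borrow/carry case analysis that complicated the proof of \cref{lem:filter}, because here the two operands are either equal or one of them is $0$.

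The main (very mild) obstacle is purely bookkeeping: one must be careful that the intermediate variable $w'$ is not the same storage location as $v$, so that $v$ is still available for the final subtraction — this is handled by keeping a preserved copy of $v$, exactly as the lemma keeps $temp2'$ as a second copy. I would also remark that an entirely analogous argument gives $\operatorname{filter}_+(u,v) = v - \operatorname{filter}_-(u,v)$, so the two corollaries are symmetric; and since $\operatorname{filter}_-$ is what is needed to route the ``$z \neq 0$'' branch of a zero test while $\operatorname{filter}_+$ routes the ``$z = 0$'' branch, having both in a bounded number of linear steps with floating-point rounding is exactly what the Minsky-machine simulation requires.

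\begin{proof}
By \cref{lem:filter}, compute $w'=\operatorname{filter}_+(u,v)$ in three linear operations with floating-point rounding, keeping a preserved copy of $v$. By that lemma, $w'=v$ if $v\geq u$ and $w'=0$ if $v<u$. Now perform the single additional linear operation
\[
w \leftarrow v - w'.
\]
Since $u,v$ are of the form $10^c$, both $v$ and $w'$ lie in $\{0\}\cup\{10^c : c\in\N\}$, so $v-w'$ is either $0$ or a power of $10$ and the rounding acts as the identity on it. If $v<u$ then $w=[v-0]=v$; if $v\geq u$ then $w=[v-v]=0$. In both cases $w=\operatorname{filter}_-(u,v)$, using in total four linear operations with floating-point rounding.
\qed\end{proof}
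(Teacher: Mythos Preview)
Your proof is correct and follows essentially the same approach as the paper: the paper's proof is the one-line observation that $\operatorname{filter}_-(u,v) = v - \operatorname{filter}_+(u,v)$, computed in four steps by first invoking \cref{lem:filter} for three steps and then subtracting. Your additional remarks about exactness of the final subtraction and preserving a copy of $v$ are accurate elaborations of this same idea.
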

\begin{proof}
Observe that $\operatorname{filter}_-(u,v) = v- \operatorname{filter}_+(u,v)$, which can be encoded in four steps by first computing $\operatorname{filter}_+(u,v)$ in three steps.
\qed\end{proof}

\section{Pseudo-periodic orbits of non-negative LDS}
\label{sec:posdec}

We shift our focus to proving that model checking is decidable for systems with non-negative matrices. We first establish the behaviour of the system in this section
and then complete the proof of \cref{thm:modelcheckingdec} in \cref{sec:decidablemodelchecking}. Our main result is that the rounded orbit of an LDS is periodic in the following sense, which we call \emph{pseudo-periodic}.

\begin{definition} 
A sequence $(x^{(t)})_{i \in \N}$ of $d$-dimensional vectors of floating-point numbers is 
called pseudo-periodic if and only if there exists a starting point $N
\in \mathbb{N}$, period $T \in \mathbb{N}$ and growth rates
$\alpha_1,\dots,\alpha_d \in \mathbb{Z}$ such that
\[\forall t \geq N, \forall j\in\{1,\dots,d\}, 
(x^{(t + T)})_j = 10^{\alpha_j} (x^{(t)})_j.\]

We say the sequence is \emph{effectively} pseudo-periodic if the defining constants $N,T,\alpha_1,\dots, \alpha_d$ can be computed.
\end{definition}

\begin{theorem}\label{thm:main:pseudo-periodic}

Let $(M,x)$ be a $d$-dimensional LDS  
where $M$ is non-negative
and let $(x^{(t)})_{t\in \N}$ be its rounded orbit.

The rounded orbit $(x^{(t)})_{t\in \mathbb{N}}$ is effectively pseudo-periodic.
\end{theorem}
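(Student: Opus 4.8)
The plan exploits two elementary facts about precision-$p$ floating-point arithmetic. First, mantissas range over the \emph{finite} set of $p$-digit decimals in $\{0\}\cup[0.1,1)$, so the sequence $(x^{(t)})_{t\in\N}$ is pseudo-periodic exactly when the mantissas are eventually periodic and the exponents eventually change by a fixed amount per period: the only unboundedness to control is in the exponents. Second, the rounding is \emph{mantissa-based}, so whenever a vector is known up to an overall factor $10^\gamma$, its rounded image under $M$ is determined up to a computable overall factor. I would induct along the topological order of the strongly connected components $S_1,\dots,S_s$ of $G_M$. Non-negativity of $M$ and of $x$ makes the whole orbit non-negative. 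Assume $(x^{(t)}_{S_1},\dots,x^{(t)}_{S_{j-1}})$ has already been shown to be effectively pseudo-periodic with a common pre-period $N$ and period $T$ (made common by maxima and least common multiples). Then $S_j$ evolves by $x^{(t+1)}_{S_j}=[\,M_{S_j}x^{(t)}_{S_j}+f^{(t)}\,]$, where the input $f^{(t)}$ is a fixed non-negative rational linear image of the already-pseudo-periodic coordinates — that is, a finite non-negative combination of pseudo-periodic floating-point sequences whose growth rates are known (and computable). It thus suffices to prove: given a non-negative irreducible matrix together with finitely many non-negative pseudo-periodic inputs feeding it, the resulting orbit on that SCC is effectively pseudo-periodic. (A singleton SCC with no self-loop is degenerate, its variable being a pure function of its inputs, and is subsumed by the same argument.)

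The technical core is a \emph{coupling lemma}: the exponents of the components of a single SCC cannot drift apart. Writing $h$ for the imprimitivity index of $M_{S_j}$ (every cycle of $G_{M_{S_j}}$ has length divisible by $h$) and $\beta(t)=\max_i \exponent((x^{(t)}_{S_j})_i)$, the claim is that there is a bound $B$ with $\beta(t)-\exponent((x^{(t)}_{S_j})_i)\le B$ for all large $t$ and every component $i$ that is not eventually zero at $t$'s residue modulo $h$. The upper inequality is immediate from log-boundedness: $\|x^{(t+1)}_{S_j}\|_\infty\le c\,\|M\|_\infty\,\|x^{(t)}_{S_j}\|_\infty+\|f^{(t)}\|_\infty$ while $\|x^{(t+1)}_{S_j}\|_\infty\ge \|f^{(t)}\|_\infty/c$, so $\beta$ moves by a bounded amount per step and never sits far below the incoming input. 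For the lower inequality I trace a hypothetical lagging component backwards in time: were $(x^{(t)}_{S_j})_{i_0}$ to sit $K$ exponent-digits below $\beta(t)$, then every supported in-neighbour of $i_0$, whether through $M_{S_j}$ or through an input, must sit roughly $K$ digits below $\beta(t-1)$; iterating this for a fixed number $L\ge d$ of steps that is a multiple of $h$ — using the cyclic-class structure of irreducible matrices, and the fact that the support propagates along edges and can only grow, to keep every traced path inside the support — forces \emph{every} supported component at time $t-L$ to lie below $\beta(t-L)-O(1)$, contradicting the definition of $\beta(t-L)$ once $K$ exceeds that constant. The support pattern is itself eventually periodic, so ``eventually zero at a residue'' is well-defined and computable.

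Granting the coupling lemma, pseudo-periodicity follows from a finite-state abstraction. Fix a residue class of $t$ modulo a period $T$ that is a common multiple of $h$ and of the input periods; one also determines the eventual growth rate of the SCC's orbit — governed by the larger of the internal rate of $M_{S_j}$ (its spectral radius) and the fastest input rate — which classifies each input coordinate as either staying within a computable distance of $\beta(t)$ or sinking permanently more than $p+B+2$ digits below it (hence irrelevant to every rounding). Encode the reduced state at time $t$ by the mantissas of all components of $x^{(t)}_{S_j}$, their pairwise exponent differences (a finite set by the lemma), and, for each at-scale input, its own reduced state (eventually periodic, finitely many values) together with its exponent relative to $\beta(t)$. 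By the mantissa-based property, the reduced state at time $t+T$ is a computable function of that at time $t$, and this function also records the per-coordinate exponent shifts. A deterministic trajectory through a finite set is eventually periodic, so running this explicitly constructed finite system until a repetition occurs yields a starting time, a period $T'$ and integers $\alpha_i$ with $(x^{(t+TT')}_{S_j})_i=10^{\alpha_i}(x^{(t)}_{S_j})_i$ for all large $t$. Reassembling all the SCCs, $(x^{(t)})_{t\in\N}$ is effectively pseudo-periodic, which is \cref{thm:main:pseudo-periodic}.

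I expect the main obstacle to be the coupling lemma, and within it the backwards-tracing step for imprimitive matrices: there component magnitudes genuinely oscillate (think of a bipartite cycle), so the statement holds only within residue classes, and the cyclic-class and support-propagation structure must be handled with care. A secondary delicate point is the uniform treatment of the pseudo-periodic inputs — in particular deciding, for each input coordinate, whether it remains at the scale of $S_j$ or becomes permanently negligible, which is precisely where the internal growth rate of the SCC must be weighed against the growth rates of the feeding sequences. Everything past the coupling lemma is bookkeeping around a finite deterministic system.
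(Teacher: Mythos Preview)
Your overall architecture --- induct along SCCs in topological order, prove that exponents within an SCC stay within a bounded window (your coupling lemma, the paper's \cref{lem:closetop} and \cref{lemma:zclosenessofsme}), then pigeonhole on the finite set of mantissa-and-offset patterns --- is exactly the paper's. Handling imprimitivity via residue classes modulo $h$ is a local alternative to the paper's global blowup by the LCM of all cycle lengths; either works, and your backward-tracing argument for the coupling lemma is equivalent to the paper's forward bounds.

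The gap is in your finite-state abstraction for a non-top SCC, specifically the step where you ``determine the eventual growth rate of the SCC's orbit --- governed by the larger of the internal rate of $M_{S_j}$ (its spectral radius) and the fastest input rate'' and then classify inputs as at-scale or negligible. The growth rate of the \emph{rounded} orbit on $S_j$ is not the spectral radius of $M_{S_j}$: it is a rational $\alpha/T$ that depends on the rounding and on the starting state, and it is precisely the quantity you are computing. Without it you cannot decide which inputs are at-scale, and without that decision the input's exponent offset from $\beta(t)$ is unbounded, so your state space is not finite. The paper sidesteps this circularity with a different mechanism: it asks whether the feeder \emph{influences} $S_j$ (i.e.\ whether removing it would change some rounded value) infinitely often. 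If yes, the coupling lemma makes feeder and $S_j$ $\beta$-close at each influence time, and the pigeonhole runs on their \emph{joint} state at those times (\cref{lemma:infcase}). If no, $S_j$ eventually evolves as a top SCC; only then is its internal rounded rate computed, by direct simulation, and compared to the feeder's rate to decide whether closeness can recur (\cref{lemma:finitecase}). This influence-based case split is the missing ingredient; a priori comparison of rates does not suffice because the relevant rate is an output of the argument, not an input.
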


In order to establish this result, we will find some partitions of the graph associated 
to $M$ such that each part is effectively pseudo-periodic with the
same increasing rate $\alpha$ for every state in the partition.

\subsection{Preprocessing periodicity}
The core of our approach is to show that, within each SCC of the graph associated to $M$, the values associated with states are of similar magnitude. 
This is however only true if the SCC is aperiodic.
When a state is in a periodic SCC its value could change drastically depending on which phase the system is in. For example, consider a simple alternation between two states, in which the value is very large in one state and very small in the other; the states will alternate between big and small values.

We ``hide'' these periodic behaviours by blowing up the system so that 
each SCC of the new system describes only one of the periodic 
subsequence and we will subsequently show that the value of each state in an SCC is either zero or of a similar magnitude.

We apply the following construction to our system. Let $\period$ be the period, defined as the least common multiple of the length of every simple cycle in the graph.
Let $Q$ be the indices of $M$ (\emph{i.e.}\ the states of the
generated automaton).
We define new states $Q' = Q\times \{0,\dots,\period-1\}$ by annotating each state in $Q$ with the phase. To avoid cluttering notation we will regularly refer to states in $Q'$ in the form $(q,i+\ell)$ for $\ell\in\mathbb{Z}$, on the understanding that the phase, $i+\ell$, is normalised into $ \{0,\dots,\period-1\}$ by taking the residue modulo $P$ if necessary.
We define a new matrix $M'$ over the states $Q'$ such that $M'_{(q,i+1),(q',i)} = M_{q,q'}$ for $i\in\{0,\dots,\period-1\}$, and zero otherwise. 
We initialise a new starting vector $x^{(0)}_{(q,0)} = x^{(0)}_q$ and $x^{(0)}_{(q,i)} = 0$ for $i\in\{1,\dots,\period-1\}$.

Intuitively, at each time step $t$ 
the vector generated by the original system is equal to the vector of the new system
restricted to the states indexed by $i \equiv t\mod \period$ and every state with another index
is equal to $0$.

Let $S \subseteq Q$ be a strongly connected component. In $Q'$ there exists strongly connected components $S'_1,\dots, S'_k\subseteq Q'$ with $k\le |S|$ such that 
$\bigcup_{i= 1}^k S'_i = S \times \{0,\dots,\period-1\}$.
Each set $S'_j$ is periodic, with period $\period$.

Henceforth in the rest of this section we work on the system $( M', x')$ implicitly over states $Q'$ which, by overloading of notation, we rename $( M, x)$ over $Q$ to avoid cluttering notation.

Note that this transformation also requires to marginally complicate the targets.
Indeed, consider a set $\target\subseteq \mathbb{R}^Q$.
We define the sets $\target/i$ for $i<\period$ such that 
$\target/i = \{y \in \mathbb{R}^{Q'}\mid \exists y'\in \target\ :\ y_{(q,i)}= y'_q \text{ for } q\in Q \text{ and }y'_{(q,j)} = 0 \text{ for } j\ne i \}$. 
The hitting times of $\target$, $\mathcal{Z}(\target)$, in the original LDS can then be 
obtained in the new LDS as the disjoin union: $\bigcup_{i\in\{0,\dots,\period-1\}} \mathcal{Z}(\target/i)$. It suffices to characterise the hitting times for each $\target/i$.

\subsection{Pseudo-periodicity within top SCCs}

Let us first consider top SCCs, these are SCCs with no incoming edges from states of other SCC, and therefore the value of each variable at each step depends only on the value of 
states in the same SCC.

\begin{lemma}
\label{lem:wietland}
Let $S_j$ be a strongly connected component of $(M, x)$.
Let $S_{j,i} = \{(q,i) \in S_j\}$ be the states associated with $S_j$ from the $i$-th phase.

There exists $C\leq \period d^2$, such that, for every $i,j$, 
$(M^C)_{S_{j,i}}$ is positive.%
\end{lemma}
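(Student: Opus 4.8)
\emph{Proof plan.} The plan is to pass to the matrix $M^{\period}$, recognise $S_{j,i}$ as one of the cyclic classes of the strongly connected component $S_j$, and then read off $C$ from Wielandt's bound on the exponent of a primitive non-negative matrix. First I would record the effect of the $\period$-blow-up on walks of length $\period$: every edge of the (renamed, blown-up) graph increases the phase by one, so a walk of length $\period$ returns to its starting phase and the digraph of $M^{\period}$ has no edge between vertices of different phases. Hence $A := (M^{\period})_{S_{j,i}}$ is a non-negative matrix recording exactly the walks of length $\period$ between phase-$i$ vertices of $S_j$. Since $S_j$ is an SCC, any walk between two of its vertices stays inside $S_j$, and such a walk of length $m\period$ visits $S_{j,i}$ after every $\period$ steps; consequently $A^m = (M^{m\period})_{S_{j,i}}$ for every $m$, and it suffices to prove that $A$ is primitive and to bound its exponent uniformly in $i,j$.

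To establish primitivity I would argue on the digraph $G'$ of $A$, on vertex set $S_{j,i}$. Strong connectedness of $G'$ is inherited from $S_j$: a walk between two vertices of $S_{j,i}$ has length divisible by $\period$ (its endpoints share a phase), stays in $S_j$, and visits $S_{j,i}$ every $\period$ steps, so it descends to a walk of $G'$. For aperiodicity, the lengths of closed walks through a fixed vertex $v$ of $G'$ are exactly $\tfrac{1}{\period}$ times the lengths of closed walks through $v$ in $S_j$ (again because every closed walk of the blown-up graph has length divisible by $\period$); the $\gcd$ of the latter is the period of $S_j$, which equals $\period$ by the preprocessing, so the $\gcd$ of cycle lengths of $G'$ is $1$. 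Thus $A$ is irreducible and aperiodic, i.e.\ primitive. (Equivalently: the sets $\{S_{j,i}\}_{i}$ are exactly the $\period$ cyclic classes of the irreducible matrix $M_{S_j}$, and $A$ is the $i$-th diagonal block of $(M_{S_j})^{\period}$ in cyclic normal form, which is primitive by Perron--Frobenius theory.)

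Finally I would apply Wielandt's theorem: an $n\times n$ primitive non-negative matrix $B$ satisfies $B^{(n-1)^2+1}>0$, and then $B^{m}>0$ for all $m\ge (n-1)^2+1$ since an irreducible matrix has no zero row and positivity is preserved under further multiplication. As $S_{j,i}$ contains at most one vertex per original coordinate, $n:=|S_{j,i}|\le d$ and $(n-1)^2+1\le d^2$, so $A^{d^2}=(M^{\period d^2})_{S_{j,i}}$ is positive; taking $C:=\period d^2$ works for all $i$ and $j$, which is within the claimed bound. The step I expect to be the main obstacle is the aperiodicity argument: one must carefully relate the period of the ``$\period$-step return'' digraph on a single phase class to the period of $S_j$ itself, and it is precisely the fact that $\period$ is a common multiple of all cycle lengths of the original graph---so that the period of each blown-up SCC is exactly $\period$ rather than a proper multiple---that prevents $A$ from being imprimitive and makes the argument go through.
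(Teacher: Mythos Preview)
Your proposal is correct and follows essentially the same approach as the paper: show that $(M^{\period})_{S_{j,i}}$ is irreducible and aperiodic (hence primitive), then invoke Wielandt's bound $(n-1)^2+1\le d^2$ on the exponent to obtain $C=\period d^2$. The paper compresses this into four lines, merely asserting irreducibility and period $1$ and citing references for primitivity and Wielandt's inequality; you supply the justification for irreducibility and aperiodicity (via the period-$\period$ structure of the blown-up SCC) and make explicit the identity $A^m=(M^{m\period})_{S_{j,i}}$, but the argument is the same.
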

\begin{proof}

The matrix $(M^\period)_{S_{j,i}}$ is non-negative, irreducible (\emph{i.e.}, its graph is 
strongly connected) and of period 1.
As such, $(M^\period)_{S_{j,i}}$ is primitive~\cite{Boyle2005NOTESOT}
which means that a power $C'$ of this matrix is positive. 
The theorem follows with $C=\period C'$.
Moreover, $C'$ is at most $d^2-2d+2$~\cite{SCHNEIDER20025}.
\qed \end{proof}

Our goal is to show that within an SCC, each of the non-zero entries are of a similar magnitude due to the presence of a relatively short path ($C$) between any two states in the SCC. To do this we introduce the notion of closeness and observe some useful properties.

\begin{definition}
We say two numbers $x,x'\in\mathbb{FP}_{10}[p]$ are $\delta$-close, denoted  by $x\approx_\delta x'$ if $\abs{\exponent(x) - \exponent(x')} < \delta$. 
In particular, for every $\delta> 0$, zero is assumed to be $\delta$-close only to itself.

We extend the notion to vectors $y,y\in \mathbb{FP}_{10}[p]^S$, indexed by $S\subseteq Q$, such that $y\approx_\delta y'$ if all entries of the same phase are $\delta$-close to one another across both $y$ and $y'$, that is, for each phase $i\in\{0,\dots,P-1\}$ and all
$(q,i),(q',i)\in S$: $y_{(q,i)}\approx_\delta y'_{(q',i)}$, $y_{(q,i)}\approx_\delta y_{(q',i)}$ and $y'_{(q,i)}\approx_\delta y'_{(q',i)}$.
\end{definition}

\begin{restatable}{proposition}{pcloseness}\label{prop:closeness}
Let $x,x'\in\mathbb{FP}_{10}[p]$ be non-zero floating-point numbers.
\begin{enumerate}[label=\textnormal{(\arabic*)}]
\item If $x\approx_\delta x'$ then $10^{-\delta-1} \le x/x' \le 10^{\delta+1}$.
\item If $10^{-\delta} \le x/x' \le 10^{\delta}$ then $x\approx_{\delta+2} x'$. 
\item If $x\approx_{\delta} x'$ and $x' \approx_{\eta} x''$ then $x\approx_{\delta+\eta+4} x''$. 
\end{enumerate}
\end{restatable}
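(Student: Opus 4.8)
The plan is to unwind the definitions and reduce everything to elementary inequalities between exponents, using the normalisation convention that the mantissa of a nonzero floating-point number lies in $[0.1,1)$. Write $x = m\cdot 10^\alpha$ and $x' = m'\cdot 10^{\alpha'}$ with $m,m'\in[0.1,1)$, so $\alpha = \exponent(x)$ and $\alpha' = \exponent(x')$. The single fact I will use repeatedly is that $0.1 \le m,m' < 1$, hence $10^{-1} < m/m' < 10$, i.e. the quotient of the mantissas always contributes at most one extra power of ten in either direction. From $x/x' = (m/m')\cdot 10^{\alpha-\alpha'}$ this gives the two-sided bound $10^{\alpha-\alpha'-1} < x/x' < 10^{\alpha-\alpha'+1}$, which is the workhorse for all three parts.

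For part~(1): if $x\approx_\delta x'$ then $|\alpha-\alpha'| < \delta$, i.e. $\alpha-\alpha' \le \delta - 1$ (exponents are integers, so a strict inequality with an integer bound tightens by one) and likewise $\alpha-\alpha'\ge -(\delta-1)$. Plugging into the workhorse bound yields $x/x' < 10^{\delta-1+1} = 10^{\delta}$ and $x/x' > 10^{-(\delta-1)-1} = 10^{-\delta}$, which is slightly stronger than the claimed $10^{-\delta-1}\le x/x'\le 10^{\delta+1}$; I would just state the weaker claimed inequality, or keep the sharper constants if convenient for later use. For part~(2): from $10^{-\delta}\le x/x'\le 10^{\delta}$ and $x/x' = (m/m')10^{\alpha-\alpha'}$ with $10^{-1} < m/m' < 10$, we get $10^{\alpha-\alpha'} = (x/x')\cdot(m'/m) \in (10^{-\delta-1}, 10^{\delta+1})$, hence $-\delta-1 < \alpha-\alpha' < \delta+1$, so $|\alpha-\alpha'| \le \delta+1 < \delta+2$, giving $x\approx_{\delta+2} x'$. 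For part~(3): $|\alpha - \alpha''| \le |\alpha-\alpha'| + |\alpha'-\alpha''| < \delta + \eta$ by the triangle inequality on integers, which already gives $x\approx_{\delta+\eta} x''$, so in particular $x\approx_{\delta+\eta+4} x''$; the slack of $4$ is harmless and presumably chosen to absorb the vector-indexed version of the definition uniformly.

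Honestly, there is no real obstacle here — the statement is a bookkeeping lemma, and the only thing to be careful about is the off-by-one arithmetic coming from (a) the strict-versus-non-strict inequality in the definition of $\delta$-closeness over the integers, and (b) the single extra power of ten contributed by the mantissa ratio in each direction. I would also remark explicitly that the hypothesis $x,x'\neq 0$ is essential: the convention assigns exponent $-\infty$ to zero, so none of these quotient bounds make sense otherwise, and indeed $0$ is declared $\delta$-close only to itself. One stylistic choice: since the constants in (1) and (2) are not tight, I would either present the sharp bounds ($10^{-\delta}\le x/x'\le 10^{\delta}$ in (1), $x\approx_{\delta+1}x'$ in (2)) with a remark that the weaker stated form suffices, or simply prove the stated form directly; I lean toward the latter for readability, establishing each part in one or two lines from the workhorse inequality.
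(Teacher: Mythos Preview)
Your proof is correct and, for parts~(1) and~(2), follows essentially the same route as the paper: write each number as mantissa times $10^{\text{exponent}}$, use that a nonzero mantissa lies in $[0.1,1)$ so that the mantissa ratio contributes at most one power of ten, and then push through the inequalities. Your observation that the strict inequality $|\alpha-\alpha'|<\delta$ on integers tightens to $|\alpha-\alpha'|\le\delta-1$ is a nice refinement the paper does not exploit; it gives sharper constants but, as you say, is inessential.

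Where you genuinely diverge is part~(3). The paper proves transitivity by bootstrapping through parts~(1) and~(2): from $x\approx_\delta x'$ and $x'\approx_\eta x''$ it applies~(1) twice to get $x/x'\le 10^{\delta+1}$ and $x'/x''\le 10^{\eta+1}$, multiplies to obtain $x/x''\le 10^{\delta+\eta+2}$, and then invokes~(2) to conclude $x\approx_{\delta+\eta+4}x''$. You instead apply the triangle inequality directly to the exponents, obtaining $|\alpha-\alpha''|<\delta+\eta$ and hence $x\approx_{\delta+\eta}x''$. Your argument is shorter, more elementary, and yields the tighter constant $\delta+\eta$ rather than $\delta+\eta+4$; the paper's detour through ratios is unnecessary here since $\delta$-closeness is defined purely in terms of exponent difference. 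The paper's route does have the minor conceptual advantage of demonstrating that~(1) and~(2) together already imply~(3), which is perhaps why the authors chose it, but your direct argument is preferable.
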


\begin{lemma}
\label{lem:closetop}
Let $S_j$ be a top strongly connected component of $(M, x)$, 
and let $C$ be as given by Lemma~\ref{lem:wietland}.

There exists $\beta\in \mathbb{N}$ such that for all $(q,i), (q',i)\in S_{j}$ and every $t \ge C$ then
\begin{itemize}
    \item if $t\not\equiv i\mod \period$, then $x^{(t)}_{(q,i)} = 0$, 
    \item otherwise, $x^{(t)}_{(q,i)} \approx_{\beta} x^{(t)}_{(q',i)}$.
\end{itemize}
\end{lemma}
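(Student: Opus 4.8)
The plan is to prove \cref{lem:closetop} by combining the "short path" guarantee of \cref{lem:wietland} with the log-boundedness of the rounding function, using the closeness arithmetic of \cref{prop:closeness}. First I would dispense with the easy bullet: since $S_j$ is a top SCC, the value of any state $(q,i)\in S_j$ at time $t$ is determined entirely by states of $S_j$, and by the construction of $M'$ an edge always advances the phase by one, so a nonzero contribution can reach $(q,i)$ only along paths of length $\equiv i - 0 \pmod \period$ from the initially nonzero states (those with phase $0$); hence $x^{(t)}_{(q,i)}=0$ whenever $t\not\equiv i \pmod \period$. (One should note a subtlety: some states $(q',i)\in S_{j}$ may in fact be identically zero forever if $S_j$ splits into several periodic sub-SCCs not all reachable with the right phase; for those the claim is vacuous since $0$ is $\delta$-close only to itself — I would restrict attention to states that are eventually nonzero, which, once nonzero, stay nonzero by the sign-preserving/log-bounded property.)

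For the main bullet, fix $t\ge C$ with $t\equiv i\pmod\period$ and two states $(q,i),(q',i)\in S_j$ which are (eventually) nonzero. The idea is: by \cref{lem:wietland}, $(M^C)_{S_{j,i}}$ is a positive matrix, so every entry of $x^{(t)}_{(q,i)}$ is, up to rounding, bounded below and above by positive multiples of the entries of $x^{(t-C)}$ restricted to $S_{j,i}$. Concretely, working with the exact (unrounded) iterate one step at a time and invoking log-boundedness $|y|/c \le |[y]| \le c|y|$ at each of the $C$ rounding steps, one gets constants $m = \min$ and $M_{\max}=\max$ over the (finitely many) positive entries of $(M^C)_{S_{j,i}}$ and the weight of the largest/smallest single matrix entries, such that
\[
 c^{-C}\, m \cdot \max_{(r,i)\in S_{j,i}} x^{(t-C)}_{(r,i)} \ \le\ x^{(t)}_{(q,i)} \ \le\ c^{C}\, d\, M_{\max}^{?}\, \cdot \max_{(r,i)\in S_{j,i}} x^{(t-C)}_{(r,i)},
\]
and the same two-sided bound holds for $x^{(t)}_{(q',i)}$ with the same right-hand reference quantity $\max_{(r,i)} x^{(t-C)}_{(r,i)}$. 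Dividing, $x^{(t)}_{(q,i)}/x^{(t)}_{(q',i)}$ lies in $[10^{-\gamma},10^{\gamma}]$ for some $\gamma$ depending only on $C$, $c$, $d$ and the entries of $M^C$ — crucially \emph{independent of $t$}. By \cref{prop:closeness}(2) this gives $x^{(t)}_{(q,i)}\approx_{\gamma+2} x^{(t)}_{(q',i)}$, so we may take $\beta = \gamma+2$.

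I would carry the steps out in this order: (i) handle the phase/zero bullet via the edge-phase invariant; (ii) record that once a relevant state is nonzero it stays nonzero (sign preservation + log-boundedness), so the orbit restricted to $S_{j,i}$ is eventually entrywise strictly positive and we work past that point, absorbing it into $C$ if needed; (iii) unfold $x^{(t)} = [M[M\cdots[Mx^{(t-C)}]\cdots]]$ for $C$ steps, bounding each entry above and below in terms of $\|x^{(t-C)}_{S_{j,i}}\|_\infty$ using non-negativity, positivity of $(M^C)_{S_{j,i}}$, and log-boundedness, to obtain uniform multiplicative constants; (iv) take the ratio for $(q,i)$ vs $(q',i)$, cancel the common reference term, and apply \cref{prop:closeness}(2). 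The main obstacle is step (iii): one must be careful that rounding after each of the $C$ intermediate multiplications does not break the two-sided comparison — this is exactly where log-boundedness is essential, since a single entry of $Mx$ could otherwise round down to something of much smaller magnitude — and that the lower bound genuinely uses the \emph{positivity} of $(M^C)_{S_{j,i}}$ (not just irreducibility) so that \emph{every} target entry receives a contribution from the maximal source entry along a length-exactly-$C$ path. Getting an explicit, $t$-independent value of $\beta$ out of this (rather than just the existence of \emph{some} bound) is what makes the later effective pseudo-periodicity argument go through.
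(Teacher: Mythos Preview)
Your proposal is correct and follows essentially the same route as the paper's proof: both arguments use the positivity of $(M^C)_{S_{j,i}}$ from \cref{lem:wietland} together with step-by-step log-boundedness to sandwich every entry $x^{(t)}_{(q,i)}$ between uniform multiples of $\max_{(r,i)\in S_{j,i}} x^{(t-C)}_{(r,i)}$, then take the ratio and invoke \cref{prop:closeness}(2) to obtain $\beta=\gamma+2$. The paper works with the single-step entries of $M$ (via a constant $m$ bounding $M_{q,q'}$ and $M_{q,q'}^{-1}$) rather than the entries of $M^C$, but this is only a bookkeeping difference; your extra remarks about identically-zero states and sign preservation are reasonable caveats the paper leaves implicit.
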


\begin{proof}
Let $t\in\N$.
If $t\not\equiv i \mod \period$ then $x^{(t)}_{(q,i)} = 0$ for all $(q,i)\in S_{j,i}$ by construction.

Otherwise, let $\displaystyle m \ge \max_{q,q'\in Q : M_{q,q'}\neq 0}\max\left(M_{q,q'}, (M_{q,q'})^{-1}\right)$ be a constant larger than all values occurring in $M$ and so that $\frac{1}{m}$ is smaller than all non-zero values appearing in $M$. Let $c$ be the constant from the log bounded property of the rounding function $\round{\cdot}$ and $d$ be the dimension of $M$.

Observe that for all $t\in \N$ with $t=i\mod \period$ we have \begin{align*}
x^{(t)}_{(q,i)} &= \round{\sum_{(q',i-1)} M_{(q,i),(q',i-1)}x^{(t-1)}_{(q',i-1)}}
\\ & \ge\frac{1}{c}{\sum_{(q',i-1)} M_{(q,i),(q',i-1)}x^{(t-1)}_{(q',i-1)}}\tag{by log bounded}
\\ & \ge \frac{1}{cm} \quad  \max_{(q',i-1) \text{ s.t. }  M_{(q,i),(q',i-1)}> 0}{x^{(t-1)}_{(q',i-1)}} \tag{by defn of $m$}
\\ \text{In particular}
\\ x^{(t)}_{(q,i)} &\ge
 \frac{1}{cm} {x^{(t-1)}_{(q',i-1)}} \text{ for all } (q',i-1) \st M_{(q,i),(q',i-1)}> 0
\end{align*}

Using induction we obtain:  \[x^{(t+k)}_{(q,i+k)} \ge \frac{1}{(cm)^{k-1}}x^{(t+1)}_{(q',i+1)} \ge \frac{1}{(cm)^k} {x^{(t)}_{(q'',i)}}
\]
for all $(q',i+1), (q'',i)$ such that $  M^{k-1}_{(q,i+k),(q',i+1)} > 0$ and $M^{}_{(q',i+1),(q'',i)} >0$.

In particular, we have $x^{(t+C)}_{(q,i)} \ge \frac{1}{(cm)^{C}}x^{(t)}_{(q',i)} $  for all $q'$ (since $M^{C}_{(q,i),(q',i)} >0$ for all $q'$ by the previous lemma). 

On the other hand we have  \[x^{(t+1)}_{(q,i+1)} = \round{\sum_{q': M_{(q,i+1),(q',i)} >0} M_{(q,i+1),(q',i)}x^{(t)}_{(q',i)}} \le m c d\max_{(q',i)\in S_j}x^{(t)}_{(q',i)}.\] 
By induction we get that $x^{(t+C)}_{(q,i)} \le {(mcd)^{C}}\max_{(q',i)\in S_j}x^{(t)}_{(q',i)} $. 
Hence, for all $q,q'\in S_j$ we have 
\begin{equation*}\frac{1}{(mc)^{C}} \max_{(q'',i)\in S_j} x^{(t)}_{(q'',i)} \le x^{(t+C)}_{(q',i)}\quad \text{ and }\quad x^{(t+C)}_{(q,i)} \le {(mcd)^{C}}\max_{(q'',i)\in S_j} x^{(t)}_{(q'',i)}.\end{equation*}
Hence $\frac{x^{(t+C)}_{(q,i)}}{x^{(t+C)}_{(q',i)}} \le d^C(mc)^{2C}$.

Setting $\gamma=\ceil{\log_{10} d^C(mc)^{2C} }$, we thus have that 
$10^{-\gamma}x^{(t+C)}_{(q',i)} \le x^{(t+C)}_{(q,i)} \le 10^\gamma x^{(t+C)}_{(q',i)}$ for all $(q,i),(q',i) \in S_{j,i}$ and $t\in \N$. Then $x^{(t)}_{(q',i)} $ and $ x^{(t)}_{(q,i)}$ are $\beta = \gamma+2$ close by \cref{prop:closeness}.  
\qed \end{proof}

\begin{lemma}
\label{lem:effective-top}
Let $S_j$ be a top strongly connected component of $(M, x)$. Then the sequence
$(x^{(t)}_{S_j})_{t\in \N}$ is effectively pseudo-periodic.
\end{lemma}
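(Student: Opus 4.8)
The plan is to leverage Lemma~\ref{lem:closetop}, which tells us that after time $C$ all nonzero entries of $x^{(t)}_{S_j}$ (restricted to the active phase $i \equiv t \bmod P$) have exponents within a bounded window of width $\beta$. Since the mantissas live in the finite set $\mathbb{FP}_{10}[p] \cap [0.1,1)$ (together with $0$), and since within each phase there are at most $|S_j|$ states with exponents pairwise within $\beta$, the relevant "shape" of the vector at time $t$ — meaning the tuple of mantissas together with the pairwise exponent differences among the states of the active phase — ranges over a finite set.

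First I would make this finite state space precise: for $t \ge C$ define the "profile" of $x^{(t)}_{S_j}$ to consist of (a) the residue $t \bmod P$, (b) for each state $(q,i)$ in the active phase, its mantissa $\mantissa(x^{(t)}_{(q,i)})$, and (c) the exponent of each active state relative to some fixed anchor within the same phase (say, $\exponent(x^{(t)}_{(q,i)}) - \exponent(x^{(t)}_{(q_0,i)})$ for a designated $q_0$). By Lemma~\ref{lem:closetop} all these relative exponents lie in $(-\beta, \beta)$, so the set of possible profiles is finite and its size is computable. Next, I would observe that the dynamics on profiles is deterministic: since the rounding function is mantissa-based, $x^{(t+1)}_{(q,i+1)} = \round{\sum_{q'} M_{q,q'} x^{(t)}_{(q',i)}}$ depends only on the mantissas and the relative exponents of the $x^{(t)}_{(q',i)}$ (the common exponent shift factors through the linear map and the rounding), hence the profile at time $t+1$ is a computable function of the profile at time $t$. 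Therefore $(\text{profile}(x^{(t)}_{S_j}))_{t \ge C}$ is an eventually periodic sequence in a finite set; by the standard pigeonhole argument there are computable $N \ge C$ and $T$ (with $P \mid T$, to respect the phase bookkeeping) such that $\text{profile}(x^{(t+T)}_{S_j}) = \text{profile}(x^{(t)}_{S_j})$ for all $t \ge N$.

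Finally I would extract the growth rates. Equality of profiles at times $t$ and $t+T$ means the mantissas of all entries agree and the exponents agree up to a common additive shift within each phase; I would set $\alpha_{(q,i)} = \exponent(x^{(t+T)}_{(q,i)}) - \exponent(x^{(t)}_{(q,i)})$ for $t = N$ and argue via the deterministic profile dynamics that this difference is independent of $t \ge N$ and is the same for all states in the same phase of the SCC (this last point uses that the anchor exponent shifts by the same amount for every state sharing its phase, while zero entries stay zero). This yields $(x^{(t+T)})_{(q,i)} = 10^{\alpha_{(q,i)}} (x^{(t)})_{(q,i)}$ for all $t \ge N$, which is exactly pseudo-periodicity of $(x^{(t)}_{S_j})_{t\in\N}$; effectiveness follows since $C$, $\beta$, the profile transition function, and hence $N$, $T$, and the $\alpha$'s are all computable.

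The main obstacle is handling the interaction between the phase structure and the exponent anchoring cleanly: one must be careful that "profile" carries exactly enough information to make the one-step map well-defined (in particular, that a uniform rescaling of all active entries by $10^s$ commutes with one step of the rounded dynamics, which is where the mantissa-based property is essential), and that zero entries — which Lemma~\ref{lem:closetop} handles separately — are tracked consistently so that the extracted $\alpha$ is genuinely common across each phase. The rest is routine pigeonhole and bookkeeping.
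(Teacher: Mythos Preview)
Your proposal is correct and follows essentially the same approach as the paper's proof: define a finite ``profile'' (the paper calls it $y^{(t)}$, recording mantissas and exponent offsets from an anchor state), apply pigeonhole to find a repeat, and use the mantissa-based property of rounding to propagate the resulting scaling relation forward. You are if anything a bit more careful than the paper about the phase bookkeeping and the treatment of zero entries, but the argument is the same.
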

\begin{proof}
Let $\beta$ and $C$ be as in Lemma~\ref{lem:closetop}.
Denote $q_1,\dots, q_m$ the states of $S_j$.
We define the sequence $(y^{(t)})_{t\geq C}$ such that for all $t\geq C$ and $q\in S_j$
denoting $ (p^{(t)})_{q} = \mantissa([x^{(t)}_{q}])$ and 
$(\alpha^{(t)})_{q}=\exponent([x^{(t)}_{q}])$ we have that
$y^{(t)}= (p_{q_1},0, p_{q_2},\alpha_{q_2}-\alpha_{q_1},\dots,p_{q_m},
\alpha_{q_m}-\alpha_{q_1}) $. Note that this sequence can only take finitely many values 
as the mantissas have a precision of $p$ decimals and by Lemma~\ref{lem:closetop}, 
for all $k\leq m$, $\alpha_{q_k}-\alpha_{q_1}\in \{-\beta,\dots, \beta\}$.
As a consequence, the sequence $(y^{(t)})_{t\geq C}$ takes the same value multiple times.
Let $k_1$ and $k_2$ be the two distinct minimal integers such that $y^{(k_1)}=y^{(k_2)}$. 
Setting $\alpha=\alpha^{(k_2)}_{q_1}-\alpha^{(k_1)}_{q_1}$
We have that $x^{(k_1)} = x^{(k_2)}\cdot 10^{\alpha}$.
Since $\round{\cdot}$ is mantissa-based, one can show by induction that for all $t\geq 0$,
$x^{(k_1+t)} = x^{(k_2+t)}\cdot 10^{\alpha}$.
Therefore the sequence $(x^{(t)}_{S_j})_{t\in \N}$ is effectively pseudo-periodic with 
period $T=k_2-k_1$ and starting point $N=C+k_1$.

Moreover, as the maximum number of different values taken by $(y^{(t)})_{t\geq C}$ is
known, we can deduce that both $k_1$ and $k_2-k_1$ are smaller than 
$10^{p m}(2\beta + 1)^{m}+1$.
\qed \end{proof}
Note that the increasing rate is the same for every state of the strongly connected component.

\subsection{Pseudo-periodicity within lower SCCs}

We consider a strongly connected component $\Sme$, which is fed by at least one strongly connected components $F_1,\dots,F_\ell$, $\ell \ge 1$. We let $\Sfeeder = F_1\cup \dots \cup F_\ell$ and assume every $F_i$ is pseudo-periodic.

In this section we show 
\begin{theorem}
\label{thm:lowerSCC}
$\Sme$ is effectively pseudo-periodic and the growth rate of $\Sme$ is the same for all $q\in \Sme$.
\end{theorem}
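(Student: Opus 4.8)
The plan is to treat the feeders and the internal dynamics of $\Sme$ separately, isolate a single common growth rate, and close a pigeonhole argument after rescaling. Carrying as induction hypothesis that each $F_i$ is effectively pseudo-periodic \emph{with a common rate}, I would first fix a common period $T_F$ of $\Sfeeder$ and read off each feeder rate $\rho_{F_i}\in\mathbb{Q}$. In parallel, I run $\Sme$ \emph{in isolation} (as a top SCC, started from an arbitrary non-zero non-negative vector) and apply \cref{lem:effective-top} to obtain an internal rate $\rho_{\mathrm{int}}\in\mathbb{Q}$; a short monotonicity argument — rescaling by a power of $10$, which commutes with mantissa-based rounding — shows $\rho_{\mathrm{int}}$ does not depend on the chosen start. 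Then set
\[\rho \;=\; \max\Bigl(\rho_{\mathrm{int}},\ \max\{\rho_{F_i}\mid F_i \text{ eventually feeds non-zero values to }\Sme\}\Bigr),\]
which is computable since the feeders are \emph{effectively} pseudo-periodic (so one knows which of their states are eventually zero), and let $T$ be a common multiple of $\period$, $T_F$ and the internal period, so $\rho T\in\mathbb{Z}$. The degenerate cases — $\Sme$ eventually identically zero, or $\Sme$ eventually receiving only zeros (hence eventually behaving like an isolated, possibly delayed-start, top SCC) — are immediate or reduce to \cref{lem:effective-top}.

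Next I would pin down magnitudes: there is a computable $\beta$ with $\abs{\exponent(x^{(t)}_q)-\rho t}\le\beta$ for every eventually non-zero $q\in\Sme$ and all large $t$. The upper bound is a monotone induction, since $\exponent(x^{(t+1)}_q)$ is bounded by the maximum exponent over the in-neighbours of $q$ (in $\Sme$ \emph{and} in $\Sfeeder$) plus an $O(1)$ coming from $\round{\cdot}$ and the entries of $M$, and every driver has rate $\le\rho$. For the lower bound: if $\rho=\rho_{\mathrm{int}}$ then $x^{(t)}_{\Sme}$ dominates the isolated $\Sme$-orbit componentwise (monotonicity of rounding, $M\ge 0$, non-negative feed-in), hence grows at rate $\ge\rho_{\mathrm{int}}$; if $\rho=\rho_{F_i}$ for a feeder that actually injects non-zero values, then common-rate pseudo-periodicity keeps all non-zero states of $F_i$ at exponent $\rho t + O(1)$, this is transmitted along one positive edge into $\Sme$, and then spread over all of $\Sme$ using the matrix power $M^C$ of \cref{lem:wietland} (positive on each phase-slice of $\Sme$) — the lower-bound half of the computation in \cref{lem:closetop} survives unchanged since the extra feed-in is non-negative.

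Finally, the pigeonhole. Because $\round{\cdot}$ is mantissa-based, fix a large $N_0$ and rescale: put $\hat v^{(k)}=10^{-\rho T k}\,x^{(N_0+kT)}_{\Sme}$, and rescale each length-$T$ window of feeder values the same way. By the magnitude bound, the exponents of $\hat v^{(k)}$ and of the rescaled rate-$\rho$ feeders stay in a fixed bounded window, so the pre-rounding $\Sme$-contributions $\hat A$ — finite sums of fixed rationals times $p$-digit mantissas times powers of $10$ from that window — take only finitely many values, whereas the rescaled contributions $\hat\epsilon$ of the strictly slower feeders have exponents tending to $-\infty$. For each of the finitely many values $\hat A_0$, once $\hat\epsilon>0$ is small enough $[\hat A_0+\hat\epsilon]$ becomes constant in $\hat\epsilon$ (equal to $[\hat A_0]$, or, if $\hat A_0$ is a rounding tie-point, to its upward tie-break), so past a computable index $k_0$ the slow feeders stop affecting any rounding and the rescaled sampled dynamics is a \emph{fixed} computable map $\hat v^{(k+1)}=\hat G_*(\hat v^{(k)})$ on a fixed finite set. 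Its orbit is eventually periodic, say with period $T'$, and unrescaling yields $x^{(t+TT')}_q = 10^{\rho TT'}\,x^{(t)}_q$ for all large $t$ and all $q\in\Sme$; thus $\Sme$ is pseudo-periodic with the single growth-rate $\alpha_q=\rho TT'$, and all constants ($\rho,\beta,N_0,k_0,T'$) are obtained effectively.

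I expect the main obstacle to be the $\Sme$/feeder interaction in the last two steps: proving that $\rho$ is \emph{exactly} the growth rate when a same-rate feeder drives $\Sme$ (including the bookkeeping about feeding edges carrying eventually-zero values), and — more delicately — arguing that the strictly slower feeders eventually cease to influence the rounding. The naive ``they fall below the precision'' intuition is not sound, because adding an arbitrarily small positive quantity can trigger a carry propagating up into the significant digits; this must be defused via the finiteness of the possible pre-rounding values, sign-preservation of $\round{\cdot}$, and a careful treatment of tie-points.
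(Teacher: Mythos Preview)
Your route differs from the paper's. The paper does not try to compute the growth rate of $\Sme$ in advance; instead it argues a dichotomy: either $\Sfeeder$ influences $\Sme$ at infinitely many times --- and then, by \cref{lemma:B}, $\Sme$ and $\Sfeeder$ are $\beta$-close at those times, so a pigeonhole on the finitely many joint mantissa-and-exponent-offset configurations of $\Sfeeder\cup\Sme$ (\cref{lemma:infcase}) yields pseudo-periodicity with the feeder's rate --- or only finitely often, after which $\Sme$ evolves as an isolated top SCC and \cref{lem:effective-top} applies directly. Effectiveness is secured by a separate decision procedure (\cref{lemma:finitecase}) that detects which alternative holds. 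Your plan instead fixes $\rho=\max(\rho_{\mathrm{int}},\max_i\rho_{F_i})$ up front, proves a uniform two-sided bound $\abs{\exponent(x^{(t)}_q)-\rho t}\le\beta$, rescales, and runs a single pigeonhole on a fixed finite set. If it worked, it would be more uniform (no case split, no separate ``will they be close again'' test), and your observation that $\rho_{\mathrm{int}}$ is independent of the starting vector is a pleasant fact the paper never isolates.

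The gap is the upper half of your magnitude bound. Your justification --- ``$\exponent(x^{(t+1)}_q)$ is bounded by the maximum over in-neighbours plus $O(1)$, and every driver has rate $\le\rho$'' --- is circular: the in-neighbours of $q\in\Sme$ include states of $\Sme$ itself, whose rate is what you are trying to bound, and the per-step $O(1)$ slack need not be $\le\rho$, so the induction does not close. Monotonicity does not help here either, since the isolated $\Sme$-orbit is a \emph{lower} bound for the fed orbit, not an upper one. Your target statement (rate $=\rho$) is in fact correct, but the only argument I see goes through the very dichotomy you are avoiding: if $\Sme$ ever runs far ahead of $\Sfeeder$ then the feed-in drops below precision and $\Sme$ reverts to rate $\rho_{\mathrm{int}}\le\rho$, so either $\Sme$ stays within a bounded factor of $\Sfeeder$ forever (rate $\rho_{\Sfeeder}$) or it eventually decouples (rate $\rho_{\mathrm{int}}$). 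Once you make that argument you have essentially rebuilt the paper's proof, and everything downstream --- in particular the finiteness of the pre-rounding values $\hat A$, on which your slow-feeder tie-point analysis depends --- then follows. A minor secondary point: the start-independence of $\rho_{\mathrm{int}}$ and your lower bound both rely on monotonicity of $\round{\cdot}$, which is not among the axioms the paper lists, though it does hold for the round-to-nearest functions under consideration.
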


We first observe that the difference between values in $\Sme$ is bounded.
This is achieved with a proof similar to the one of \cref{lem:wietland} and
\cref{lem:closetop} (though having to combine considerations of $\Sme$ and $\Sfeeder$).

\begin{restatable}{lemma}{zclosenesssme}
\label{lemma:zclosenessofsme}

There exists $\eta,N'\in \mathbb{N}$, such that for all $(q,i),(q',i)\in \Sme$, 
all $t \ge N'$ and all $i \in \{0,\dots,P-1\}$ then
\begin{itemize}
    \item if $t\not\equiv i\mod \period$, then $x^{(t)}_{(q,i)} = 0$,
    \item otherwise, $x^{(t)}_{(q,i)} \approx_{\eta} x^{(t)}_{(q',i)}$.
\end{itemize}

\end{restatable}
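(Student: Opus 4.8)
The plan is to mimic the structure of the proof of \cref{lem:closetop}, but now the values in $\Sme$ are driven both by internal edges within $\Sme$ and by incoming edges from $\Sfeeder$. The first bullet (vanishing off-phase) is immediate from the blow-up construction exactly as before, so I focus on the closeness claim in the on-phase case. First I would fix, as in \cref{lem:closetop}, a constant $m$ bounding all nonzero entries of $M$ and their reciprocals, the log-boundedness constant $c$, and the dimension $d$. Since the feeders $F_1,\dots,F_\ell$ are (effectively) pseudo-periodic by hypothesis, after some time $N_0$ every feeder state is either identically $0$ on its off-phases or, on its on-phases, grows by a fixed rate $10^{\alpha_{F}}$ over one period; in particular any two feeder states that are ever simultaneously nonzero stay within a bounded exponent gap of each other (absorbing all feeders into one ``reference'' magnitude $R^{(t)}$ up to an additive $O(1)$ in the exponent, using \cref{prop:closeness}).

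The key step is to obtain two-sided bounds relating $x^{(t+C)}_{(q,i)}$ for $q\in\Sme$ to $\max_{(q'',i)\in\Sme} x^{(t)}_{(q'',i)}$ together with the feeder reference magnitude over the window $[t,t+C]$. For the lower bound, I would argue as in \cref{lem:closetop}: a single step gives $x^{(t+1)}_{(q,i+1)}\ge \frac{1}{cm}\,x^{(t)}_{(q',i)}$ for every in-neighbour $(q',i)$, whether that neighbour lies in $\Sme$ or in a feeder; iterating $C$ times along the positive path guaranteed by (the $\Sme$-analogue of) \cref{lem:wietland} — applied to the irreducible aperiodic block $(M^\period)_{\Sme_i}$ — yields $x^{(t+C)}_{(q,i)}\ge (cm)^{-C}\,x^{(t')}_{(q'',i')}$ for any state $(q'',i')$, in $\Sme$ \emph{or} in a feeder, reachable within that window. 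For the upper bound, one step gives $x^{(t+1)}_{(q,i+1)}\le mcd\cdot\max$ over all in-neighbours, again spanning $\Sme$ and feeders; iterating gives $x^{(t+C)}_{(q,i)}\le (mcd)^{C}\cdot\max$ of the relevant $\Sme$- and feeder-values over $[t,t+C]$. Dividing the upper bound for $(q,i)$ by the lower bound for $(q',i)$, the feeder contributions cancel up to the bounded exponent spread of $\Sfeeder$ established above, leaving $x^{(t+C)}_{(q,i)}/x^{(t+C)}_{(q',i)} \le d^{C}(mc)^{2C}\cdot 10^{O(1)}$; setting $\eta$ to be roughly $\log_{10}$ of that quantity plus the $+2$ slack from \cref{prop:closeness}, and $N' = N_0 + C$, gives the claim.

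One subtlety I would handle carefully: the lower bound must compare $x^{(t+C)}_{(q,i)}$ and $x^{(t+C)}_{(q',i)}$ \emph{in the same phase}, so the iterated inequalities must be threaded so that both chains of length $C$ terminate at time $t+C$ with phase $i$; since $C$ is a multiple of $\period$ this is exactly what (the $\Sme$-version of) \cref{lem:wietland} provides, and the feeder-to-$\Sme$ edges only ever help the lower bound, never hurt it. A second point: if at some on-phase \emph{all} in-neighbours of every $\Sme$-state in that SCC-block are zero, then $\Sme$ is identically zero from then on and the statement holds trivially; otherwise the positivity of the $C$-th power inside $\Sme$ guarantees a nonzero internal value to anchor the bounds. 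The main obstacle is the bookkeeping to make the feeder magnitudes cancel cleanly: one needs that over any window of length $C$ the feeder values that actually influence $\Sme$ are all within a fixed exponent band, which follows from pseudo-periodicity of $\Sfeeder$ but requires being careful about feeders that switch on and off and about combining the three-term triangle-type inequality of \cref{prop:closeness} only a bounded number of times (at most $|\Sfeeder|$), so the accumulated additive constant stays independent of $t$.
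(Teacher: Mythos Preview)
Your plan follows essentially the same route as the paper's proof: choose $C$ so that the relevant powers are positive, derive one-step upper and lower bounds that incorporate both internal $\Sme$-neighbours and incoming feeder edges, iterate $C$ times, and take the ratio of the bound for $(q,i)$ to that for $(q',i)$. The handling of the off-phase case and of the degenerate all-zero case is also the same.

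Where you diverge is in making the argument harder than necessary. You invoke pseudo-periodicity of $\Sfeeder$ to set up a ``reference magnitude'' $R^{(t)}$, you track feeder values over the whole window $[t,t+C]$, and you argue that ``feeder contributions cancel up to the bounded exponent spread of $\Sfeeder$'', leaving a residual $10^{O(1)}$ factor. None of this is needed. The paper's cleaner step is to choose $C$ so that both $(M^C)_{\Sme}$ \emph{and} $(M^C)_{\Sme,\Sfeeder}$ are positive (not just the internal $\Sme$-block); then the iterated lower bound for $x^{(t+C)}_{(q',i)}$ and the iterated upper bound for $x^{(t+C)}_{(q,i)}$ both land on exactly the same scalar $\displaystyle\max_{(r,i)\in\Sfeeder\cup\Sme} x^{(t)}_{(r,i)}$ at the single time $t$. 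This max cancels \emph{exactly} in the ratio, yielding $x^{(t+C)}_{(q,i)}/x^{(t+C)}_{(q',i)}\le (mc)^{2C}d^{C}$ with no slack term and, crucially, with no appeal to the internal structure or pseudo-periodicity of $\Sfeeder$ at all. Your approach would also go through, but the bookkeeping you flag as ``the main obstacle'' simply disappears once you anchor both bounds to the same max at time $t$.
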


\begin{definition}

We say that $x_{q}^{(t)}$ is influenced by $\Sfeeder$ if  \[x^{(t)}_{q} = \round{\sum_{q'\in \Sfeeder} M_{q,q'}x^{(t-1)}_{q'}  + \sum_{q'\in \Sme} M_{q,q'}x^{(t-1)}_{q'}} \ne \round{\sum_{q'\in \Sme} M_{q,q'}x^{(t-1)}_{q'}} \]
and in particular $x_{q}^{(t)}$ is influenced by $u\in \Sfeeder$ if:
\[\round{\sum_{q'\in \Sfeeder \cup \Sme} M_{q,q'}x^{(t-1)}_{q'}  }\ne \round{\sum_{q'\in \Sfeeder \cup \Sme \setminus \{u\}} M_{q,q'}x^{(t-1)}_{q'}  }  .\] 

\end{definition}

We can restrict $\Sfeeder$ to the $F_i$ in $\Sfeeder$ with the maximum growth rate. 
Indeed, from some point on, any $F_i$ with non-maximal growth rate is much smaller than the 
maximal ones, and as by the proof of~\cref{lemma:zclosenessofsme} the values within $\Sme$ 
are close to (or greater than) the maximum value within $\Sfeeder$, 
this $F_i$ would not influence with any $x_{q}^{(t)}$ with $q\in \Sme$.
Let $N_1$ be the point from which we can assume, that the elements of $\Sfeeder$ are much larger than any other feeding SCCs and are thus the only ones potentially influencing of $\Sme$.

Since each $F_i$ is assumed to be pseudo-periodic, we have that $\Sfeeder$ pseudo-periodic. Let $T$ be the period of $\Sfeeder$, $N_2$ be the starting point and $\alpha$ be the growth rate of every state of $\Sfeeder$ (meaning the exponent of every state changes by $\alpha$ every $T$ starting form the $N$-th step.) Let $N = \max\{N_1, N_2\}$, that is, the point from which we can assume $\Sfeeder$ is both pseudo-periodic and dominating 
non-maximal SCCs feeding $\Sme$.

As a direct consequence of having the same growth rate, the non-zero terms within $\Sfeeder$ are close:

\begin{restatable}{proposition}{ppisclose}
\label{prop:ppiscloseness}
If a sequence of non-zero floating-point vectors $(v^{(t)})_{t\in \mathbb{N}}$ is 
pseudo-periodic with the same growth rate within a set $Q$, then there exists $\delta$ such that 
for all $q,q'\in Q$ and all $t\ge N$,
$v_{q}^{(t)} \approx_\delta v_{q'}^{(t)}$.
\end{restatable}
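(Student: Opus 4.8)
The plan is to exploit that a common growth rate makes the pairwise ratios of coordinates periodic, hence finitely valued, and then convert a bounded ratio into closeness via \cref{prop:closeness}.

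First I would record the consequence of the hypothesis: since $(v^{(t)})_{t\in\mathbb{N}}$ is pseudo-periodic with some period $T$ and the \emph{same} growth rate $\alpha$ for every $q\in Q$, we have $v^{(t+T)}_q = 10^{\alpha} v^{(t)}_q$ for all $t\ge N$ and all $q\in Q$. Dividing the identity for $q$ by the one for $q'$, the factor $10^\alpha$ cancels, so for all $t\ge N$ and all $q,q'\in Q$,
\[
\frac{v^{(t+T)}_q}{v^{(t+T)}_{q'}} = \frac{v^{(t)}_q}{v^{(t)}_{q'}}.
\]
All coordinates are non-zero floating-point numbers, so these quotients are well-defined positive rationals. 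Consequently the quotient $v^{(t)}_q/v^{(t)}_{q'}$ depends only on the residue of $t$ modulo $T$ (for $t\ge N$), and ranging over all $t\ge N$ and all $q,q'\in Q$ it takes at most $T\lvert Q\rvert^2$ distinct values.

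Next I would let $\rho$ be the maximum, over this finite set, of $\max\!\big(v^{(t)}_q/v^{(t)}_{q'},\, v^{(t)}_{q'}/v^{(t)}_q\big)$, and set $\delta' = \ceil{\log_{10}\rho}$. Then for every $t\ge N$ and every $q,q'\in Q$ we have $10^{-\delta'} \le v^{(t)}_q/v^{(t)}_{q'} \le 10^{\delta'}$, and \cref{prop:closeness}(2) yields $v^{(t)}_q\approx_{\delta'+2} v^{(t)}_{q'}$. Taking $\delta = \delta'+2$ finishes the proof. If one wants the constant to be effective, note that $T$, $N$ and the finitely many vectors $v^{(N)},\dots,v^{(N+T-1)}$ are computable, so $\rho$, and hence $\delta$, can be computed.

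There is essentially no serious obstacle here; the only points requiring a little care are (i) that the cancellation of $10^\alpha$ crucially uses that the growth rate is common to all coordinates in $Q$ — without this the ratios need not be periodic — and (ii) that one must quantify over all residues modulo $T$ when forming $\rho$, so that the resulting $\delta$ is uniform in $t$.
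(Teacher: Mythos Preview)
Your proof is correct and follows essentially the same approach as the paper's: both arguments observe that, thanks to the common growth rate, the pairwise ratios $v^{(t)}_q/v^{(t)}_{q'}$ are $T$-periodic for $t\ge N$ and hence take only finitely many values, then bound the maximum ratio and invoke \cref{prop:closeness}(2) to convert the bounded ratio into $\delta$-closeness. The paper merely phrases the finiteness slightly differently, decomposing $v^{(t)}_q = m^{(t)}_q 10^{\alpha^{(t)}_q+\gamma^{(t)}}$ with $m^{(t)}_q,\alpha^{(t)}_q$ periodic and $\gamma^{(t)}$ independent of $q$, but the substance is identical.
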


Moreover, either $\Sfeeder$ does not influence $\Sme$, or they are close.

\begin{restatable}{lemma}{lemmaB}
\label{lemma:B}
There exists $\beta,N\in \mathbb{N}$ such that:\\
For $t\geq N$ and $(q,i)\in\Sme$, 
if $x_{(q,i)}^{(t)}$ is influenced by $(q',i-1)\in \Sfeeder$,
then $x_{(r,i)}^{(t)} \approx_\beta x_{(r',i)}^{(t)}$ for all $(r,i),(r',i)\in \Sme\cup \Sfeeder$.
\end{restatable}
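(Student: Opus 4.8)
The plan is to show that if some $x_{(q,i)}^{(t)}$ is genuinely influenced by a feeder state $(q',i-1)\in\Sfeeder$, then the contribution of $\Sfeeder$ to that coordinate is not negligible, i.e. the term $M_{(q,i),(q',i-1)}x_{(q',i-1)}^{(t-1)}$ is within a bounded number of orders of magnitude of $x_{(q,i)}^{(t)}$ itself; from there, closeness propagates everywhere. First I would fix $N$ large enough that \cref{lemma:zclosenessofsme} applies (so all non-zero values inside $\Sme$ of the current phase are $\eta$-close to each other), that $\Sfeeder$ is pseudo-periodic with the common growth rate, and that \cref{prop:ppiscloseness} applies (so all non-zero values inside $\Sfeeder$ of the current phase are $\delta$-close to each other). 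Then, by the $p+1$-finiteness and log-boundedness of the rounding function, ``influenced by $(q',i-1)$'' means that deleting the summand $M_{(q,i),(q',i-1)}x_{(q',i-1)}^{(t-1)}$ changes the rounded value; since the sum has at most $d$ terms and the rounding is $p+1$-finite, a term can only change the rounded output if it is not more than a fixed number of orders of magnitude (depending only on $p$, $d$, and the entries of $M$) below the output. Formally I would argue: if $|M_{(q,i),(q',i-1)}x_{(q',i-1)}^{(t-1)}|$ were smaller than $\frac{1}{d}10^{-(p+2)}$ times $|\sum_{q''\in\Sfeeder\cup\Sme}M_{(q,i),q''}x_{q''}^{(t-1)}|$, then removing it would not affect the first $p+1$ digits of the mantissa and hence not the rounded value — contradicting influence. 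This gives $\exponent(x_{(q',i-1)}^{(t-1)})$ bounded below in terms of $\exponent(x_{(q,i)}^{(t)})$ up to an additive constant, and combined with $M$ being non-negative (so no cancellation can make the true sum much smaller than any single summand) we also get an upper bound, yielding $x_{(q',i-1)}^{(t-1)}\approx_{\beta_0} x_{(q,i)}^{(t)}$ for a constant $\beta_0$ depending only on $p,d,m,c$.

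Next I would bridge the one-step shift: $x_{(q,i)}^{(t)}$ is one application of $M$ away from the phase-$(i-1)$ values, so using the log-bounded property and the bound $m$ on the entries of $M$, the value $x_{(q,i)}^{(t)}$ is within a constant number of orders of magnitude of $\max_{q''}x_{(q'',i-1)}^{(t-1)}$ over the nonzero predecessors, and by \cref{lemma:zclosenessofsme} (applied at phase $i-1$, time $t-1$) that max is $\eta$-close to every nonzero phase-$(i-1)$ value of $\Sme$. Likewise by \cref{prop:ppiscloseness} it is $\delta$-close to every nonzero phase-$(i-1)$ value of $\Sfeeder$. Chaining these closeness relations through \cref{prop:closeness}(3) (transitivity with a bounded additive loss), and then advancing phase-$(i-1)$ feeder values to phase-$i$ feeder values via pseudo-periodicity/one more step of $M$, I obtain a single constant $\beta$ such that $x_{(r,i)}^{(t)}\approx_\beta x_{(r',i)}^{(t)}$ for all $(r,i),(r',i)\in\Sme\cup\Sfeeder$, as required. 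All the constants involved ($\beta_0,\eta,\delta$, the entry bound $m$, the log-bound $c$, and the finitely many transitivity losses) are absolute, so their combination is an absolute $\beta$, and $N$ is the max of the finitely many thresholds named above.

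The main obstacle I expect is the quantitative step showing that an influencing feeder summand cannot be too small relative to the output. This is where one genuinely uses the $p+1$-finiteness of the rounding and the fact that the sum has boundedly many ($\le d$) terms and no cancellation (non-negativity). One has to be careful that ``influenced'' is defined via the vector rounding applied coordinatewise and via two slightly different sums (with/without all of $\Sfeeder$, and with/without a single $u$), so the argument should be phrased for the single-state version and then it is immediate that influence by $\Sfeeder$ entails influence by some $u\in\Sfeeder$. A secondary subtlety is making sure the phase bookkeeping is consistent: $(q,i)\in\Sme$ at time $t\equiv i \bmod \period$ receives from phase $i-1$ at time $t-1$, and we must apply \cref{lemma:zclosenessofsme} and \cref{prop:ppiscloseness} at the correct phase and time (and handle the trivial case where $t\not\equiv i$, in which everything is zero and there is nothing to prove); this is routine once the indexing convention is pinned down. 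The rest is assembling bounded additive losses through \cref{prop:closeness}, which is purely mechanical.
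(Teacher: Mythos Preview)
Your proposal is correct and follows essentially the same route as the paper: establish closeness within $\Sme$ via \cref{lemma:zclosenessofsme}, within $\Sfeeder$ via \cref{prop:ppiscloseness}, use the influence hypothesis together with $p{+}1$-finiteness and log-boundedness to show the influencing feeder term is within a bounded number of orders of magnitude of $x_{(q,i)}^{(t)}$, bridge the one-step phase shift via the entry bound $m$ and log-boundedness, and then chain everything with \cref{prop:closeness}(3). The paper's write-up is terser (it asserts the bound $\frac{1}{m}\le x_{(q,i)}^{(t)}/x_{(q',i-1)}^{(t-1)}\le 2m10^p$ directly rather than deriving it from $p{+}1$-finiteness as you do), but the structure and the constants involved are the same.
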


We will show \cref{thm:lowerSCC} through the following observation:
\begin{observation}
Observe that $\Sfeeder$ either influences $\Sme$ infinitely many times or finitely many times. We have two cases:
\label{obs:bothpp}
\begin{itemize}
    \item If $\Sfeeder$ influences $\Sme$ infinitely often, then they are infinitely often $\beta$-close by \cref{lemma:B}. Then we will observe through a simultaneous version of \cref{lem:effective-top} that $\Sme$ is pseudo-periodic.
    \item If $\Sfeeder$ influences $\Sme$ only finitely often, then clearly from some point on $\Sme$ behaves like a top SCC, and thus is pseudo-periodic directly by \cref{lem:effective-top}. 
\end{itemize}
\end{observation}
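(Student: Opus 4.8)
The plan is to follow the case distinction of \cref{obs:bothpp}, treating separately the situation where $\Sfeeder$ influences $\Sme$ infinitely often and the one where it does so only finitely often, and in each case to run essentially the argument of \cref{lem:effective-top}.

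\textbf{Finitely many influences.} Suppose there is a last time $t^\star$ at which some $x^{(t)}_{(q,i)}$ with $(q,i)\in\Sme$ is influenced by $\Sfeeder$. Then for every $t > t^\star$ and every $(q,i)\in\Sme$ we have $x^{(t)}_{(q,i)} = \round{\sum_{q'\in\Sme} M_{(q,i),(q',i-1)}\, x^{(t-1)}_{(q',i-1)}}$, so from time $t^\star$ onwards the sub-orbit $(x^{(t)}_{\Sme})_{t}$ coincides with the rounded orbit of the isolated LDS on $\Sme$ started at $x^{(t^\star)}_{\Sme}$. Since $\Sme$ is strongly connected, \cref{lem:closetop} and \cref{lem:effective-top} apply verbatim to this isolated system, yielding that $(x^{(t)}_{\Sme})_t$ is effectively pseudo-periodic with a growth rate common to all states of $\Sme$.

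\textbf{Infinitely many influences.} By \cref{lemma:B} there is a constant $\beta$ and infinitely many times $t$ at which all non-zero entries of $x^{(t)}_{\Sme\cup\Sfeeder}$ are pairwise $\beta$-close; fix a phase $i$ and restrict to such times with $t\equiv i\bmod\period$. Define, as in \cref{lem:effective-top}, a normalised configuration recording (i) the mantissas of every state of $\Sme\cup\Sfeeder$, (ii) the exponent of each such state relative to a fixed reference state -- bounded thanks to $\beta$-closeness together with \cref{prop:ppiscloseness} and \cref{lemma:zclosenessofsme} -- and (iii) the position of $\Sfeeder$ within its period, an element of $\{0,\dots,T-1\}$. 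This configuration takes finitely many values, so two of the relevant times $k_1 < k_2$ share the same configuration; since (iii) is part of it, automatically $k_1\equiv k_2\pmod T$. Using that $\round{\cdot}$ is mantissa-based and that $\Sfeeder$ is pseudo-periodic with period $T$, one shows by induction -- exactly the argument closing \cref{lem:effective-top}, carried out simultaneously on $\Sme$ and $\Sfeeder$ -- that $x^{(k_2+s)}_{\Sme\cup\Sfeeder} = 10^{c}\, x^{(k_1+s)}_{\Sme\cup\Sfeeder}$ for all $s\ge 0$, where $c$ is the common exponent shift of the reference state. Hence $\Sme\cup\Sfeeder$, and in particular $\Sme$, is pseudo-periodic with period $k_2-k_1$ and growth rate $c$ uniform over all states.

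\textbf{Effectivity and the main obstacle.} The delicate point is to make this effective without knowing in advance which case holds. The key observation is that if the joint normalised configuration above recurs at two influence times, the deterministic dynamics together with the fixed pseudo-periodic behaviour of $\Sfeeder$ force influence to recur forever; so infinitely many influences occur \emph{iff} such a recurrence is witnessed among the first (computably many) influence times. To turn this into an algorithm I would bound the gap between consecutive influence times: between two influences $\Sme$ runs as an isolated strongly connected system and hence follows an eventually periodic (up to scaling) trajectory with computable pre-period and period, while $\Sfeeder$ runs pseudo-periodically with growth rate $\alpha$, and comparing the two growth rates determines effectively when $\Sfeeder$ next becomes non-negligible for $\Sme$. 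This gives a computable bound $g$ on the gap, hence a computable bound on the first $K{+}1$ influence times (where $K$ bounds the number of joint configurations). Running the orbit up to that bound either exhibits the recurrence -- placing us in the infinite case with all constants in hand -- or certifies that at most $K$ influences occur with the last bounded by $N+Kg$, placing us in the finite case with $t^\star$ computed. Establishing the gap bound $g$ rigorously, in particular handling the borderline case where $\Sme$ and $\Sfeeder$ grow at the same rate, is, I expect, the main technical obstacle.
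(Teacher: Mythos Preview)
Your proposal is correct and mirrors the paper's own argument closely: the paper formalises the infinite case as \cref{lemma:infcase} (a pigeonhole on the joint mantissa/relative-exponent configuration of $\Sfeeder\cup\Sme$) and handles effectivity via \cref{lemma:finitecase} (simulate $\Sme$ in isolation for one joint period, then compare growth rates), combining the two exactly as in your sketch.

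Two remarks on the points where you differ. First, your inclusion of the phase of $\Sfeeder$ modulo $T$ in the normalised configuration is a genuine tightening of the paper's \cref{lemma:infcase}: the induction step there (``the system evolution from $x^{(t_1)}$ is equivalent to the system evolution from $x^{(t_2)}=10^{\gamma}x^{(t_1)}$'') implicitly needs $x_{\Sfeeder}^{(t_1+s)}$ and $x_{\Sfeeder}^{(t_2+s)}$ to remain scalar multiples for all $s\ge 0$, which in general requires $t_1\equiv t_2\pmod T$ since $\Sfeeder$ may itself have feeders; your bookkeeping makes this explicit and the pigeonhole bound only picks up a harmless factor of $T$. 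Second, the borderline case you worry about---$\Sme$ in isolation and $\Sfeeder$ having the same normalised growth rate---is not an obstacle: once both are past their pre-periods the exponent gap between them is periodic with period $T\cdot T_y$, so simulating one such joint period settles the question (either $\Sfeeder$ influences $\Sme$ somewhere in that window, and hence infinitely often by periodicity, or it never does again). This is exactly the first step of the paper's \cref{lemma:finitecase}, carried out before any growth-rate comparison is needed.
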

It will then remain to show that we can detect which of the two cases applies, and place a bound on the time to detect this, which will effectively reveal the constants of the pseudo-periodic behaviour. 

We now present a version of \cref{lem:effective-top}
to observe that if $\Sfeeder$ and $\Sme$ are infinitely often $\beta$-close then $\Sme$ is pseudo-periodic:

\begin{lemma}
\label{lemma:infcase}
Suppose $x_{\Sfeeder}^{(t)} \approx_\beta x_{\Sme}^{(t)}$ for infinitely many $t$.  Then there exists $t_1<t_2$, such that $x_{\Sfeeder}^{(t_1)} \approx_\beta x_{\Sme}^{(t_1)}$ and $x_{\Sfeeder}^{(t_2)} \approx_\beta x_{\Sme}^{(t_2)}$, $x_{\Sfeeder}^{(t_2)} = 10^\gamma x_{\Sfeeder}^{(t_1)}$ and $x_{\Sme}^{(t_2)} = 10^\gamma x_{\Sme}^{(t_1)}$. In particular, the sequence $(x_{\Sme}^{(t)})_{t\in \N}$ is pseudo-periodic with period $(t_2-t_1)$, starting 
from $t_1$ with growth rate of $\gamma$ in every state.
\end{lemma}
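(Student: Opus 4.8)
The plan is to mimic the pigeonhole argument of \cref{lem:effective-top}, but applied to the combined state set $\Sme\cup\Sfeeder$ and only along the (infinite) subsequence of times at which $x_{\Sfeeder}^{(t)}\approx_\beta x_{\Sme}^{(t)}$. First I would record what $\beta$-closeness buys us: on such a time $t$, combining \cref{prop:ppiscloseness} (the feeder entries are $\delta$-close to each other), \cref{lemma:zclosenessofsme} (the $\Sme$-entries are $\eta$-close to each other), and the hypothesis $x_{\Sfeeder}^{(t)}\approx_\beta x_{\Sme}^{(t)}$, transitivity (\cref{prop:closeness}(3)) gives a single constant $\beta'$ such that \emph{all} non-zero entries of $x^{(t)}$ restricted to $\Sme\cup\Sfeeder$ are pairwise $\beta'$-close. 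Consequently, fixing a reference state $q_1\in\Sme\cup\Sfeeder$, the vector of mantissas together with the vector of exponent-differences $\bigl(\exponent(x^{(t)}_q)-\exponent(x^{(t)}_{q_1})\bigr)_{q}$ ranges over a \emph{finite} set: the mantissas lie in $\mathbb{FP}_{10}[p]$ (finitely many values) and each exponent difference lies in a bounded window determined by $\beta'$.

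Next I would invoke pigeonhole on this finite-valued quantity along the infinite set of good times: there exist $t_1<t_2$, both good times, at which the mantissa vector and the exponent-difference vector agree. Setting $\gamma=\exponent(x^{(t_2)}_{q_1})-\exponent(x^{(t_1)}_{q_1})$, this agreement says exactly that $x^{(t_1)}_q = 10^{\gamma}\cdot 10^{-\gamma}\,x^{(t_2)}_q$ has... more precisely $x^{(t_2)}_q = 10^{\gamma} x^{(t_1)}_q$ for every $q\in\Sme\cup\Sfeeder$, and in particular $x_{\Sfeeder}^{(t_2)}=10^\gamma x_{\Sfeeder}^{(t_1)}$ and $x_{\Sme}^{(t_2)}=10^\gamma x_{\Sme}^{(t_1)}$. (Entries that are zero at $t_1$ are also zero at $t_2$, since we only matched non-zero patterns and the zero-pattern is part of what is matched.)

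Finally I would propagate forward in time. Since the rounding function is mantissa-based and the dynamics of $\Sme\cup\Sfeeder$ depend only on the values inside $\Sme\cup\Sfeeder$ (as $\Sfeeder$ is a union of feeders and, past $N$, the non-maximal feeders of $\Sme$ have been discarded and do not interfere), an easy induction on $k$ shows $x^{(t_1+k)}_q = 10^{-\gamma} x^{(t_2+k)}_q$ for all $q\in\Sme\cup\Sfeeder$ and all $k\ge 0$: applying $M$ to two vectors that differ by the scalar $10^{\gamma}$ produces vectors that differ by $10^{\gamma}$, and $[\cdot]$ being mantissa-based preserves this. Restricting to $\Sme$ yields that $(x^{(t)}_{\Sme})_{t\ge t_1}$ is pseudo-periodic with period $T=t_2-t_1$ and growth rate $\gamma$ in every state. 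I expect the main obstacle to be purely bookkeeping: carefully assembling the various closeness constants ($\delta,\eta,\beta$, and the slack terms $+4$ from \cref{prop:closeness}(3)) into one bound $\beta'$ valid uniformly over all good times, and making sure the forward-induction step genuinely only references states in $\Sme\cup\Sfeeder$ — i.e. that past time $N$ no other SCC influences $\Sme$, which is exactly what the choice of $N=\max(N_1,N_2)$ was set up to guarantee.
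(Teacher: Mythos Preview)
Your proposal is correct and follows essentially the same route as the paper: record, at each ``good'' time, the tuple of mantissas together with the bounded exponent offsets from a reference coordinate, apply pigeonhole to find $t_1<t_2$ with identical tuples, and then propagate forward using that $[\cdot]$ is mantissa-based. The only difference is cosmetic: the paper takes the hypothesis $x_{\Sfeeder}^{(t)}\approx_\beta x_{\Sme}^{(t)}$ as already asserting that \emph{all} entries of $x^{(t)}_{\Sme\cup\Sfeeder}$ are pairwise $\beta$-close (this is how $\beta$ was chosen in \cref{lemma:B}), so your preliminary step of assembling $\delta,\eta,\beta$ into a single $\beta'$ is redundant --- but harmless.
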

\begin{proof} At a time $t$ such that  $x_{\Sfeeder}^{(t)} \approx_\beta x_{\Sme}^{(t)}$,  we denote the vectors $x^{(t)}_{\Sfeeder} \in \mathbb{FP}_{10}[p]^{|\Sfeeder|}$ and $x^{(t)}_{\Sme} \in \mathbb{FP}_{10}[p]^{|\Sme|}$  respectively 
\begin{multline*}
(m_1^{(t)} 10^{\gamma^{(t)}_1},m_2^{(t)} 10^{\gamma^{(t)} +\alpha^{(t)}_2},\dots, m^{(t)}_{|\Sfeeder|} 10^{\gamma^{(t)} +\alpha^{(t)}_{|\Sfeeder|}}) \text{ and }\\ 
(n^{(t)}_1 10^{\gamma^{(t)} +\zeta^{(t)}_1},\dots, n^{(t)}_{|\Sme|} 10^{\gamma^{(t)} +\zeta^{(t)}_{|\Sme|}}),\end{multline*} where $m_i,n_i$ are taken from the finite set of mantissa values expressible in $p$ bits,  $\gamma^{(t)} \in \mathbb{Z}$ and $\alpha_i,\zeta_i \in \mathbb{Z}\cap[-\beta,\beta]$ denote the offset from $\gamma^{(t)}$.

Let $F$ bound the number of possible values $m_i,n_i,\alpha_i,\zeta_i$ can take on, where $F\le 10^{p(\abs{\Sfeeder}+\abs{\Sme})}\cdot (2\beta+1)^{\abs{\Sfeeder}+\abs{\Sme}-1}$. 
By the pigeonhole principle, after at most $F+1$ times in which  $x_{\Sfeeder}^{(t)} \approx_\beta x_{\Sme}^{(t)}$ there must exist two times $t_1 < t_2$ where the values of $m_i,n_i,\alpha_i,\beta_i$'s are equal (although the value of $\gamma$ could be different), thus $x_{\Sfeeder \cup \Sme}^{(t_2)} = \frac{10^{\gamma^{(t_2)}}}{10^{\gamma^{(t_1)}}}x_{\Sfeeder \cup \Sme}^{(t_1)}$.

Since the rounding function is mantissa-based, the system evolution from $x^{(t_1)}$ is equivalent to the systems evolution from $x^{(t_2)}= 10^\gamma x^{(t_1)}$, where $\gamma$ is the growth rate, $\gamma^{(t_2)}- \gamma^{(t_1)}$.
\qed \end{proof}

We can in fact decide whether $x_{\Sfeeder}^{(t)} \approx_\beta x_{\Sme}^{(t)}$ for the last time:
\begin{restatable}{lemma}{aretheycloseagain}
\label{lemma:finitecase}
Let $\beta,N$ be defined as in \cref{lemma:B}.
If $t\ge N$  then it is decidable whether there exists $t' > t$ such that $x_{\Sfeeder}^{(t')} \approx_\beta x_{\Sme}^{(t')}$.
\end{restatable}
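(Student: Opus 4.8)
## Proof Plan for Lemma~\ref{lemma:finitecase}

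The plan is to reduce the question ``does $x_{\Sfeeder}^{(t')} \approx_\beta x_{\Sme}^{(t')}$ ever happen again after time $t$?'' to a finite search. The key observation is that $\Sfeeder$ is pseudo-periodic (with period $T$, starting point $N$, and growth rate $\alpha$ common to all its states, by hypothesis), so its exponent profile is completely understood: writing $\exponent(x^{(N+rT+s)}_{u}) = E_{s,u} + r\alpha$ for $u\in\Sfeeder$, $0\le s < T$, $r\ge 0$, the offsets among states of $\Sfeeder$ are eventually fixed and the whole block just drifts by $\alpha$ each period. Meanwhile, by \cref{lemma:zclosenessofsme}, the states inside $\Sme$ are all $\eta$-close to one another from time $N'$ onwards; so $x_{\Sfeeder}^{(t')} \approx_\beta x_{\Sme}^{(t')}$ holds precisely when the common ``magnitude'' of the $\Sme$-block, say $\exponent(x^{(t')}_{(q_0,i)})$ for a fixed representative $q_0$, lies within a bounded window (of width depending only on $\beta$ and $\eta$) of the magnitude $E_{s,u} + r\alpha$ of $\Sfeeder$ at that time, with $t' \equiv N + rT + s$.

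Next I would split on the sign of $\alpha$, the growth rate of $\Sfeeder$. If $\alpha \ge 0$: from \cref{lemma:B} (and \cref{lemma:zclosenessofsme}) one knows that whenever $\Sfeeder$ influences $\Sme$ the two blocks become $\beta$-close, and in general the $\Sme$-block is never asymptotically larger than the $\Sfeeder$-block is able to force — more precisely, one argues that the exponent of $\Sme$ grows at most like $\alpha$ per period as long as $\Sfeeder$ keeps influencing it. The delicate point is that $\Sme$ could also, for a while, coast on its own internal dynamics and shrink or grow at a different rate; but because the $\Sme$-block evolves by a fixed non-negative matrix on its own when uninfluenced, its exponent moves quasi-linearly, so the difference $\exponent(x^{(t')}_{(q_0,i)}) - (E_{s,u}+r\alpha)$ is a quasi-linear function of $t'$ whose slope we can bound. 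Then closeness can recur infinitely often only if that slope is (essentially) zero, which one can detect by computing the $\Sme$-only growth rate (via \cref{lem:effective-top} applied to $\Sme$ treated as a top SCC, which is exactly the ``finite'' branch of \cref{obs:bothpp}) and comparing it to $\alpha$; if the slopes differ, closeness can hold only on a bounded prefix whose length we bound explicitly, so a finite search suffices. If they agree, then $\Sme$ is (eventually) pseudo-periodic with the same growth rate as $\Sfeeder$ by \cref{lem:effective-top}, the offset sequence $\exponent(x^{(t')}_{(q_0,i)}) - E_{s,u}$ is eventually periodic in $r$ with period dividing $\operatorname{lcm}$ of the two periods, so whether $\beta$-closeness recurs is decided by inspecting one such period.

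The symmetric case $\alpha < 0$ is handled the same way: now $\Sfeeder$ shrinks, so the $\Sme$-block — which never falls far below the running maximum of $\Sfeeder$ while influenced, and otherwise follows its own quasi-linear exponent — will eventually overtake $\Sfeeder$ unless the $\Sme$-only growth rate also equals $\alpha$, again a computable comparison leading either to a bounded search or to an eventually-periodic offset sequence. In all cases the decision procedure is: compute the period, starting point and growth rate of $\Sfeeder$; compute the $\Sme$-only pseudo-periodic data from \cref{lem:effective-top}; form the quasi-linear/eventually-periodic offset function $r \mapsto \exponent(x^{(N+rT+s)}_{(q_0,i)}) - (E_{s,u}+r\alpha)$ for each residue $s$; and check whether it ever enters the bounded ``$\beta$-close'' window for some $r$ with $N+rT+s > t$ — a decidable question since the function is quasi-linear (hence its behaviour is determined by finitely many terms plus an eventually constant slope) or eventually periodic. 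The main obstacle is the bookkeeping in the first split: establishing that, while $\Sme$ is uninfluenced, its exponent really is quasi-linear with a computable, bounded slope, and pinning down the explicit bound on the prefix length beyond which closeness cannot recur when the two slopes disagree — this requires carefully iterating the log-bounded estimates of \cref{lem:closetop}-style for $\Sme$ alone and tracking the additive error accumulated per step.
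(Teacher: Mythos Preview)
Your proposal has the same core idea as the paper---compute the intrinsic growth rate of $\Sme$ (by applying \cref{lem:effective-top} to $\Sme$ as if it were a top SCC) and compare it to the known growth rate $\alpha$ of $\Sfeeder$---but the organisation has a gap and some unnecessary complication.

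The paper's argument is operational and avoids the circularity you flag as ``the main obstacle''. From time $t$ it runs two orbits in parallel for $N_y + T\cdot T_y$ steps: the actual orbit $x^{(t+t')}_{\Sme}$, and the orbit $y^{(t')}$ of the isolated system $(M_{\Sme}, x^{(t)}_{\Sme})$. If these ever differ in that window, then $\Sfeeder$ influenced $\Sme$ at some step, and by \cref{lemma:B} closeness occurred---done. If they agree throughout, then $\Sme$ has just completed one full synchronised pseudo-period \emph{as a top SCC}, so its growth rate over that period is exactly $\alpha_y$; moreover, since $\Sme$ was uninfluenced, $\Sfeeder$ must currently sit below $\Sme$ in exponent. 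Now a straight comparison of $\alpha_y/T_y$ versus $\alpha/T$ decides everything: if $\Sfeeder$ grows faster per step the gap closes monotonically (and one bounds the time to closeness explicitly); otherwise the gap never closes and the answer is no. Your plan tries to jump directly to analysing the offset function $r \mapsto \exponent(x^{(N+rT+s)}_{(q_0,i)}) - (E_{s,u}+r\alpha)$ and declare it quasi-linear, but that function involves the \emph{actual} $\Sme$-orbit, whose slope you only know once you have established non-influence over a full period---which is precisely the simulation step you omit.

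Two smaller points: the case split on the sign of $\alpha$ is a red herring---what matters is the comparison of $\alpha_y/T_y$ with $\alpha/T$, not whether either is positive---and the ``eventually periodic offset when the slopes agree'' branch is not needed, because if the slopes agree and $\Sme$ was uninfluenced for a full synchronised period then (by mantissa-basedness) it stays uninfluenced forever with the same exponent gap, so closeness never recurs.
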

\begin{proofsketch}[Full proof in \cref{proofof:lemma:finitecase}]
If we considered $\Sme$ in isolation, without the effect of $\Sfeeder$, we know it would be pseudo-periodic. We can simulate one period of $\Sme$ with and without the effect of $\Sfeeder$ and determine if $\Sfeeder$ influences $\Sme$ within one period. If it does then they must be close at this point. If $\Sfeeder$ does not influence $\Sme$ we know that $\Sme$ will behave pseudo-periodically at least until $\Sfeeder$ is close to $\Sme$ again; having established a growth rate for $\Sme$, we can compare the growth rates of $\Sfeeder$ and $\Sme$ to see if $\Sme$ will ever be close to $\Sfeeder$ again in the future. \qed
\end{proofsketch}

Finally to conclude the proof of \cref{thm:lowerSCC}, we refine \cref{obs:bothpp} to show that the period is bounded and thus the growth rates are computable:
\begin{itemize}
\item either $\Sfeeder$ is $\beta$-close to $\Sme$ infinitely often, in particular if they become close $F+1$ times then by \cref{lemma:infcase} it is pseudo-periodic. 
\item  or the system is pseudo-periodic because it behaves like a top-SCC, in which \cref{lem:effective-top} gives effective computation of the constants.
\end{itemize}
Which of these occurs is determined
by at most $F+1$ applications of \cref{lemma:finitecase}.

\section{Decidability of model checking}
\label{sec:decidablemodelchecking}
In this section we use the results obtained in the previous section to show that model checking is decidable. We use pseudo-periodicity to show that the characteristic word is eventually periodic, a case for which model checking is decidable.

\thmmodelchecking*

Consider a semialgebraic target $\target$, which can be expressed as a Boolean combination of polynomial inequalities over variables representing the dimensions. That is $\target = \{(x_1,\dots,x_d) \mid  \bigwedge_{i} \bigvee_j P_{ij}(x_1,\dots,x_n) \triangleright_{ij} 0 \}$, where $\triangleright_{ij}\in \{\ge,>,=\}$.

Given a linear dynamical system $(M,x)$ defining the rounded orbit $(x^{(n)})_{n=1}^{\infty}$, recall that $\mathcal{Z}(\target) = \{n \mid x^{(n)} \in \target\}$ are the hitting times of $\target$.  We claim that this set is semi-linear (equivalently eventually periodic) for semialgebraic $Y$.

\begin{definition}
A 1-dimensional linear-set, defined by a base $b\in\mathbb{N}$ and period $p\in\mathbb{N}$, is the set $\{x \mid \exists k \in \mathbb{N} : x = b + k\cdot p\}$. 
A semi-linear set is the finite union of a finite set $F\subseteq \mathbb{N}$ and linear sets. It can be assumed that each linear-set has the same period. Hence a 1-dimensional semi-linear set $X$ is defined by a finite set $F\subseteq \mathbb{N}$ and integers $m,p,b_1,\dots,b_m\in \mathbb{N}$ such that $x\in X$ if and only if $x\in F$ or $x = b + k\cdot p$ for some $k\in\mathbb{N}$ and $b\in\{b_1,\dots,b_m\}$. 
\end{definition}

\begin{theorem}\label{thm:semialgebraicissemilinear}
Let $\target$ be a semialgebraic target, $\mathcal{Z}(\target)$ is a semi-linear set.
\end{theorem}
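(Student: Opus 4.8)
The plan is to exploit \cref{thm:main:pseudo-periodic}, which tells us the rounded orbit is effectively pseudo-periodic: there exist computable $N, T$ and growth rates $\alpha_1,\dots,\alpha_d$ such that for all $t \ge N$ and all $j$, $(x^{(t+T)})_j = 10^{\alpha_j}(x^{(t)})_j$. First I would discard the finite prefix $x^{(0)},\dots,x^{(N-1)}$: membership of these finitely many vectors in $\target$ is decidable (each is a fixed rational vector and $\target$ is semialgebraic), so these indices contribute a computable finite set to $\mathcal{Z}(\target)$. It then suffices to show that for each residue class $r \in \{0,\dots,T-1\}$ the set $\{k \in \N : x^{(N + r + kT)} \in \target\}$ is eventually periodic (indeed, either a finite or a cofinite arithmetic-progression-like set), since $\mathcal{Z}(\target)$ is the union over the $T$ residue classes of these sets, suitably shifted, together with the finite prefix contribution, and a finite union of semi-linear sets is semi-linear.

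Fix a residue class $r$ and write $t_k = N + r + kT$. By pseudo-periodicity, $(x^{(t_k)})_j = 10^{k\alpha_j}\,(x^{(t_0)})_j$ for each coordinate $j$. So membership $x^{(t_k)} \in \target$ amounts to asking whether the boolean combination $\bigwedge_i \bigvee_j P_{ij}\big(10^{k\alpha_1}c_1,\dots,10^{k\alpha_d}c_d\big) \triangleright_{ij} 0$ holds, where $c_j = (x^{(t_0)})_j$ are fixed rationals. It therefore suffices to analyse a single polynomial condition $P(10^{k\alpha_1}c_1,\dots,10^{k\alpha_d}c_d) \triangleright 0$ and show the set of $k$ satisfying it is eventually periodic; eventual periodicity is preserved under the finite boolean operations combining these. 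Expanding $P$ as a sum of monomials $\sum_\mu a_\mu \prod_j x_j^{\mu_j}$, substituting $x_j = 10^{k\alpha_j}c_j$ gives $\sum_\mu \big(a_\mu \prod_j c_j^{\mu_j}\big) \cdot 10^{k \langle \alpha, \mu\rangle}$, i.e.\ an expression of the form $f(k) = \sum_{e \in E} b_e \, 10^{ke}$ where $E \subseteq \Z$ is a finite set of exponents (the distinct values of $\langle \alpha,\mu\rangle$ over monomials $\mu$) and the $b_e$ are fixed, computable rationals (grouping monomials with equal $\langle\alpha,\mu\rangle$, and discarding those with $b_e = 0$). The key point: for $k$ large, $f(k)$ is dominated by the term with the largest exponent $e^\star = \max\{e \in E : b_e \neq 0\}$, so $\operatorname{sign}(f(k)) = \operatorname{sign}(b_{e^\star})$ for all sufficiently large $k$ — and a threshold $k_0$ beyond which this holds is computable, since we can bound $|f(k) - b_{e^\star}10^{ke^\star}| \le (|E|-1)\max_e|b_e| \cdot 10^{k e'}$ where $e' = \max\{e \in E\setminus\{e^\star\}: b_e\neq 0\}$ (or $f$ is identically $b_{e^\star}$ if $E$ is a singleton), and solving $|b_{e^\star}|10^{ke^\star} > (|E|-1)\max_e|b_e|\,10^{ke'}$ for $k$. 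Hence the truth value of $f(k) \triangleright 0$ is constant for $k \ge k_0$, so the satisfying set is either cofinite-in-$\{k\ge k_0\}$ or finite there; either way it is the union of a finite set (membership for $k < k_0$ being decidable by direct evaluation) and possibly all of $\{k \ge k_0\}$, which is eventually periodic with period $1$. Closing under the boolean combination defining $\target$ and taking the union over the $T$ residue classes and adding the finite prefix yields the semi-linear set $\mathcal{Z}(\target)$.

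The computations involved are effective throughout: $N,T,\alpha$ are computable by \cref{thm:main:pseudo-periodic}; the coefficients $c_j = (x^{(t_0)})_j$ are obtained by running the orbit; the polynomials $P_{ij}$ and hence the reduced exponent sets and coefficients $b_e$ are computable; and the threshold $k_0$ is obtained by an elementary inequality on exponentials. The main obstacle — and it is a genuinely mild one — is the bookkeeping to ensure the dominant-term argument is uniformly effective across all polynomials and all residue classes simultaneously, and in particular handling the degenerate sub-cases cleanly: a polynomial condition that reduces to $b_{e^\star} \triangleright 0$ with a single surviving exponent (truth value independent of $k$), and equality predicates $\triangleright \,=\, `='$, where one must observe that $f(k) = 0$ can hold for at most finitely many $k$ once $e^\star$ is the unique maximal exponent (so such $k$ are found by checking $k < k_0$), while if $f \equiv 0$ as a function — which happens iff $E$ reduces to the single value with $b_e=0$, i.e.\ all $b_e$ vanish — it holds for all $k$. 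None of this requires number-theoretic input of the kind that obstructs the unrounded Skolem/Positivity problems, precisely because the ``eigenvalue ratios'' here are powers of $10$ rather than arbitrary algebraic numbers.
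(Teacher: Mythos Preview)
Your proposal is correct and follows essentially the same approach as the paper: both reduce to a single polynomial inequality via boolean closure of semi-linear sets, then use pseudo-periodicity to argue that the sign of the polynomial along the orbit is eventually periodic, being determined by the monomial(s) of maximal growth rate. The paper phrases this slightly more abstractly---showing pseudo-periodicity is closed under products and under equal-growth-rate sums, so that the sign depends only on the periodic mantissa---whereas you fix a residue class mod~$T$ upfront and write the explicit exponential sum $f(k)=\sum_e b_e\,10^{ke}$; your explicit handling of cancellation (discarding exponents $e$ with $b_e=0$ before selecting $e^\star$) is arguably cleaner than the paper's treatment of that edge case, but the mathematical content is the same.
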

\cref{thm:semialgebraicissemilinear} essentially completes the proof of \cref{thm:modelcheckingdec}. It is almost immediate that the characteristic word is eventually periodic (see \cref{lemma:eventuallyperiodic} in the appendix for a formal proof) and thus the model-checking problem can be decided by checking $A\cap\overline{B} = \emptyset$, where $A$ is an automaton representing the characteristic word and $B$ encodes the language of $\phi$.

It is standard that semi-linear sets are closed under intersection, union, and complementation (see~\cite{Haase18} for a nice introduction to semi-linear sets). Thus in order to express the hitting times of $\mathcal{Z}(\target)$ it is sufficient to express the hitting times of $ \{(x_1,\dots,x_d) \mid  P(x_1,\dots,x_n) \geq 0 \}$ for a finitely many polynomials $P$. Conjunction is found by taking the intersection of the hitting times, and disjunction by taking union. The hitting times of $P(x_1,\dots,x_n)>0$ can be rewritten as the complement of the hitting times of  $-P(x_1,\dots,x_n)\ge 0$. The hitting times of $P(x_1,\dots,x_n)=0$ is the conjunction (intersection) of $P(x_1,\dots,x_n)\ge0$ and $-P(x_1,\dots,x_n)\ge0$. Thus \cref{thm:semialgebraicissemilinear} is a consequence of the following lemma. 

\begin{lemma}Assume $x^{(t)} = (z_1^{(t)},\dots,z_d^{(t)} )_{i = 1}^{\infty}$, is a pseudo-periodic sequence with start point $N$, period $T$ and growth rates $\alpha_1,\dots,\alpha_n$ and $P \in \mathbb{Q}[x_1, \cdots, x_d]$ a rational polynomial in $d$ variables.\footnote{Some variables may be redundant, that is, if the polynomial does not depend on all dimensions of $x^{(t)}$ then some of the variables may not appear in $P$.} Then, $\{i \in \mathbb{N} \mid P(z_1^{(t)}, \cdots, z_{d}^{(t)}) \geq 0 \}$ is a semi-linear set.
\end{lemma}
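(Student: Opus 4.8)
The plan is to split the time axis according to the residue class $t \bmod T$ and analyze each arithmetic progression separately, since along such a progression the pseudo-periodic structure collapses the problem to a finite question. Concretely, fix a residue $r\in\{0,\dots,T-1\}$ with $r \ge N$ (handle the finitely many $t < N$ by brute force, contributing a finite set to the eventual semi-linear set). For $t = r + kT$ with $k\ge 0$, pseudo-periodicity gives $z_j^{(t)} = 10^{k\alpha_j}\,z_j^{(r)}$. Substituting into $P$, the quantity $P(z_1^{(t)},\dots,z_d^{(t)})$ becomes a sum of terms, one per monomial $c\prod_j x_j^{e_j}$ of $P$: each such monomial evaluates to $c\,\big(\prod_j (z_j^{(r)})^{e_j}\big)\cdot 10^{k\sum_j e_j\alpha_j}$. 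Thus along the progression, $P$ has the form $f(k) = \sum_{\lambda\in\Lambda} a_\lambda\, 10^{k\lambda}$, where $\Lambda\subseteq\mathbb{Z}$ is the (finite) set of distinct values $\sum_j e_j\alpha_j$ taken over monomials of $P$, and $a_\lambda\in\mathbb{Q}$ is the (computable) sum of the corresponding coefficient-times-constant contributions; crucially $\Lambda$ and the $a_\lambda$ depend only on $r$ (not on $k$), and they are all computable from the effectively-given constants $N,T,\alpha_1,\dots,\alpha_d$ and the values $x^{(r)}$ (which we can compute by running the orbit).

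The key step is then to argue that the sign of $f(k) = \sum_{\lambda\in\Lambda} a_\lambda 10^{k\lambda}$ is eventually constant in $k$, and that the threshold and the eventual sign are computable; moreover I must handle the initial segment. Let $\lambda^\ast = \max\{\lambda\in\Lambda : a_\lambda\neq 0\}$ be the dominant exponent (if all $a_\lambda=0$ then $f\equiv 0$ and every $t$ in the progression is a hitting time, giving a full linear set). For large $k$, the term $a_{\lambda^\ast}10^{k\lambda^\ast}$ dominates all others in absolute value — since $10^{k\lambda^\ast}/10^{k\lambda} = 10^{k(\lambda^\ast-\lambda)}\to\infty$ — so $\operatorname{sign} f(k) = \operatorname{sign} a_{\lambda^\ast}$ for all $k \ge k_0$, and a concrete bound on $k_0$ is obtained from $\abs{a_{\lambda^\ast}} \ge \sum_{\lambda\neq\lambda^\ast}\abs{a_\lambda}10^{-k(\lambda^\ast-\lambda)}$, which holds once $10^{k} > \abs{\Lambda}\cdot\max_{\lambda\neq\lambda^\ast}\abs{a_\lambda/a_{\lambda^\ast}}^{1/(\lambda^\ast-\lambda)}$ or similar — an explicitly computable threshold. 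For the finitely many $k < k_0$ we simply evaluate $f(k)$ (a computation with rationals) and check the inequality directly. Hence $\{k\ge 0 : f(k)\ge 0\}$ is, for each residue $r$, a finite set together with either all of $\{k\ge k_0\}$ or none of it — in particular a semi-linear subset of $\mathbb{N}$ in the variable $k$. Pulling back through $t = r + kT$ turns this into a semi-linear subset of the progression $r + T\mathbb{N}$, and taking the union over all residues $r$ and adjoining the finitely many exceptional $t<N$ yields a semi-linear set; by the standard fact that finite unions of semi-linear sets are semi-linear (and can be put over a common period, here a divisor-multiple of $T$), $\{t : P(z_1^{(t)},\dots,z_d^{(t)}) \ge 0\}$ is semi-linear.

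The main obstacle I anticipate is making the dominant-term argument genuinely \emph{effective} with a clean, writable bound: one has to be careful that $\Lambda$ is correctly computed (distinct monomials may collapse to the same $\lambda$, and their rational coefficients — weighted by the fixed reals $\prod_j (z_j^{(r)})^{e_j}$, which are themselves rationals in $\mathbb{FP}_{10}[p]$, hence exactly representable — must be summed, possibly producing cancellation so that $a_{\lambda^\ast}=0$ even though the top monomials are nonzero), and that the "eventually dominant" threshold $k_0$ is extracted as an explicit function of the $\abs{a_\lambda}$ and the gaps $\lambda^\ast-\lambda$. A secondary bookkeeping point is the sign convention when some $z_j^{(r)}=0$: then the relevant monomials vanish identically, which is automatically absorbed into the computation of $a_\lambda$ (such a monomial simply contributes $0$), so no separate case is needed, but it is worth a remark. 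None of this is deep — it is essentially the observation that a finite $\mathbb{Q}$-linear combination of distinct geometric sequences $10^{k\lambda}$ has ultimately constant, computable sign — but the write-up needs to be precise about which constants are computable and why.
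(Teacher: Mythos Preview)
Your argument is correct and rests on the same core observation as the paper's proof: each monomial of $P$, evaluated along the orbit, grows like $10^{k\lambda}$ for an integer rate $\lambda$, and the terms of maximal rate eventually determine the sign, leaving only finitely many exceptional steps plus a periodic pattern. The organisation differs slightly. The paper first establishes two closure properties of pseudo-periodic sequences (products remain pseudo-periodic with growth rates adding; sums of equal-rate sequences remain pseudo-periodic), then argues that the sign of $P(x^{(t)})$ coincides, eventually, with the sign of the sum of the maximal-rate monomials, which is itself pseudo-periodic and hence has periodic sign. You instead fix a residue class $t\equiv r\pmod T$ up front, reduce to an explicit exponential sum $f(k)=\sum_\lambda a_\lambda 10^{k\lambda}$ with rational $a_\lambda$, and run a direct dominant-term argument in $k$. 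Your route is a bit more elementary and, notably, handles the cancellation case (where the maximal-rate contributions sum to zero on some residue) more cleanly: by defining $\lambda^\ast$ as the largest $\lambda$ with $a_\lambda\neq 0$ you automatically fall through to the next relevant rate, whereas the paper's phrasing tacitly assumes the top-rate partial sum is nonzero. The paper's structural phrasing, on the other hand, makes the ``why'' (pseudo-periodicity is preserved) more reusable. Either presentation suffices for the lemma.
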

\begin{proof}
First, we show that pseudo-periodicity is closed under product. Suppose $x_i^{(N+Tn)} = m_i 10^{\beta_i +\alpha_i\cdot n}$ and  $x_j^{(N+Tn)} = m_j 10^{\beta_j +\alpha_j\cdot n}$. Observe that  $x_i^{(N+Tn) } \cdot x_j^{(N+Tn) } = m_i\cdot 10^{\beta_i + \alpha_i n}m_j\cdot 10^{\beta_j + \alpha_jn} = m_im_j \cdot 10^{\beta_i + \beta_j + n(\alpha_i + \alpha_j)}$.  We conclude that the vector $(x_i \cdot x_j)^{(t)}$ is pseudo-periodic with growth rate $\alpha_i + \alpha_j$. Observe that the mantissa precision increase by at most 2.

Secondly, we show that if two pseudo-periodic sequences have the same growth rate, then their sum is also pseudo-periodic with the same growth rate. Suppose $x_i^{(N+Tn)} = m_i 10^{\beta_i +\alpha\cdot n}$, and $x_j^{(N+Tn)} = m_j 10^{\beta_j +\alpha\cdot n}$. Observe that $(x_i + x_j)^{(N+Tn)} = m_i 10^{\beta_i +\alpha\cdot n}+m_j 10^{\beta_j +\alpha\cdot n}= (m_i + m_j\cdot 10^{\beta_j-\beta_i}) 10^{\beta_i +\alpha\cdot n} $. Observe that the mantissa precision increased by at most $10^{|\beta_j-\beta_i|}$.

Let $P(x_1,\dots, x_n) = \sum_{i=1}^N c_i Z_i$, where $Z_i$ is a product of $x_1,\dots, x_n$.  Consider each monomial $Z_i$ occurring in $P$, since produce preserves pseudo-periodicity, we conclude that $Z_i$ is pseudo-periodic. $P^{(t)}$ is thus a linear combination of these pseudo-periodic vectors. Note our prior observation does not immediately imply that $P^{(t)}$ is pseudo-periodic as we required taking the sum of elements with the same growth rate. However, from some point on, we are only interested in those with the maximal growth rate.

Without loss of generality, let $Z_1,\dots, Z_r$ have the maximum-growth rate, and $Z_{r+1},\dots, Z_N$ have strictly smaller growth rate. For every $L \in \N$ there exists $N\in\mathbb{N}$ such that for all $t > N$,  $\exponent(Z_{1}^{(t)}) - \exponent(Z_{r+1}^{(t)}) > L$. 

Hence there exists $N\in\mathbb{N}$ such that for all $t > N$ if $\sum_{i=1}^r c_i Z_i > 0 $ if and only if $\sum_{i=1}^N c_i Z_i = \sum_{i=1}^r c_i Z_i + \sum_{i=r+1}^N c_i Z_i > 0$ because $\abs{\sum_{i=r+1}^N c_i Z_i} < \abs{\sum_{i=1}^r c_i Z_i}$ from some point on. Hence $\operatorname{sign}(\sum_{i=1}^{N} c_i Z_i^{(t)}) = \operatorname{sign}(\sum_{i=1}^{r} c_i Z_i^{(t)})$. 

Thus we restrict our attention to $\sum_{i=1}^{r} c_i Z_i^{(t)}$. Since each of the $Z_i$ for $i\in\{1,\dots,r\}$ have the same growth rate, we know that $\sum_{i=1}^{r} c_i Z_i^{(t)}$ is pseudo-periodic. Since $\operatorname{sign}(\sum_{i=1}^{r} c_i Z_i^{(t)})$ does not depend on the exponent, only the periodic mantissa, we have that the sign is periodic. The hitting times for $t \le N$ can be determined exhaustively and included in the finite set of the semi-linear set.
\qed \end{proof}

\subsubsection{Acknowledgements} 
Partially funded by DFG grant 389792660 as part of TRR~248 -- CPEC, see \href{https://perspicuous-computing.science}{\texttt{perspicuous-computing.science}}.
Joël Ouaknine is also affiliated with Keble College, Oxford as \href{http://emmy.network/}{\texttt{emmy.network}} Fellow.  David Purser was partially supported by the ERC grant INFSYS, agreement no. 950398.

\bibliographystyle{splncs04}

\begin{thebibliography}{10}
\providecommand{\url}[1]{\texttt{#1}}
\providecommand{\urlprefix}{URL }
\providecommand{\doi}[1]{https://doi.org/#1}

\bibitem{AbbasiSDUA21}
Abbasi, R., Schiffl, J., Darulova, E., Ulbrich, M., Ahrendt, W.: Deductive
  verification of floating-point java programs in key. In: Groote, J.F.,
  Larsen, K.G. (eds.) Tools and Algorithms for the Construction and Analysis of
  Systems - 27th International Conference, {TACAS} 2021, Part of {ETAPS} 2021.
  Part {II}. Lecture Notes in Computer Science, vol. 12652, pp. 242--261.
  Springer (2021). \doi{10.1007/978-3-030-72013-1_13}

\bibitem{AkshayAOW15}
Akshay, S., Antonopoulos, T., Ouaknine, J., Worrell, J.: Reachability problems
  for {Markov} chains. Inf. Process. Lett.  \textbf{115}(2),  155--158 (2015).
  \doi{10.1016/j.ipl.2014.08.013}

\bibitem{0001BGV22}
Akshay, S., Bazille, H., Genest, B., Vahanwala, M.: On robustness for the
  {Skolem} and {Positivity} problems. In: Berenbrink, P., Monmege, B. (eds.)
  39th International Symposium on Theoretical Aspects of Computer Science,
  {STACS} 2022. LIPIcs, vol.~219, pp. 5:1--5:20. Schloss Dagstuhl -
  Leibniz-Zentrum f{\"{u}}r Informatik (2022).
  \doi{10.4230/LIPIcs.STACS.2022.5}

\bibitem{AlmagorKKO021}
Almagor, S., Karimov, T., Kelmendi, E., Ouaknine, J., Worrell, J.: Deciding
  {\(\omega\)}-regular properties on linear recurrence sequences. Proc. {ACM}
  Program. Lang.  \textbf{5}({POPL}),  1--24 (2021). \doi{10.1145/3434329}

\bibitem{Baier0JKLOPPW20}
Baier, C., Funke, F., Jantsch, S., Karimov, T., Lefaucheux, E., Ouaknine, J.,
  Pouly, A., Purser, D., Whiteland, M.A.: Reachability in dynamical systems
  with rounding. In: 40th {IARCS} Annual Conference on Foundations of Software
  Technology and Theoretical Computer Science, {FSTTCS} 2020. LIPIcs, vol.~182,
  pp. 36:1--36:17. Schloss Dagstuhl - Leibniz-Zentrum f{\"{u}}r Informatik
  (2020). \doi{10.4230/LIPIcs.FSTTCS.2020.36}

\bibitem{baierconcur22}
Baier, C., Funke, F., Jantsch, S., Karimov, T., Lefaucheux, E., Ouaknine, J.,
  Purser, D., Whiteland, M.A., Worrell, J.: {Parameter Synthesis for Parametric
  Probabilistic Dynamical Systems and Prefix-Independent Specifications}. In:
  Klin, B., Lasota, S., Muscholl, A. (eds.) 33rd International Conference on
  Concurrency Theory (CONCUR 2022). Leibniz International Proceedings in
  Informatics (LIPIcs), vol.~243, pp. 10:1--10:16. Schloss Dagstuhl --
  Leibniz-Zentrum f{\"u}r Informatik, Dagstuhl, Germany (2022).
  \doi{10.4230/LIPIcs.CONCUR.2022.10}

\bibitem{BeckerPDT18}
Becker, H., Panchekha, P., Darulova, E., Tatlock, Z.: Combining tools for
  optimization and analysis of floating-point computations. In: Havelund, K.,
  Peleska, J., Roscoe, B., de~Vink, E.P. (eds.) Formal Methods - 22nd
  International Symposium, {FM} 2018, Held as Part of the Federated Logic
  Conference, {FloC} 2018. Lecture Notes in Computer Science, vol. 10951, pp.
  355--363. Springer (2018). \doi{10.1007/978-3-319-95582-7_21}

\bibitem{BiluLNOPW22}
Bilu, Y., Luca, F., Nieuwveld, J., Ouaknine, J., Purser, D., Worrell, J.:
  {Skolem} meets {Schanuel}. In: Szeider, S., Ganian, R., Silva, A. (eds.) 47th
  International Symposium on Mathematical Foundations of Computer Science,
  {MFCS} 2022. LIPIcs, vol.~241, pp. 20:1--20:15. Schloss Dagstuhl -
  Leibniz-Zentrum f{\"{u}}r Informatik (2022).
  \doi{10.4230/LIPIcs.MFCS.2022.20}

\bibitem{Boyle2005NOTESOT}
Boyle, M.: Notes on the {P}erron-{F}robenius theory of nonnegative matrices
  (2005)

\bibitem{Braverman06}
Braverman, M.: Termination of integer linear programs. In: Ball, T., Jones,
  R.B. (eds.) Computer Aided Verification, 18th International Conference, {CAV}
  2006. Lecture Notes in Computer Science, vol.~4144, pp. 372--385. Springer
  (2006). \doi{10.1007/11817963_34}

\bibitem{buchi1990decision}
B{\"u}chi, J.R.: On a decision method in restricted second order arithmetic.
  In: The collected works of J. Richard B{\"u}chi, pp. 425--435. Springer
  (1990)

\bibitem{ChonevOW16}
Chonev, V., Ouaknine, J., Worrell, J.: On the complexity of the orbit problem.
  J. {ACM}  \textbf{63}(3),  23:1--23:18 (2016). \doi{10.1145/2857050}

\bibitem{DCostaKMOSS021}
D'Costa, J., Karimov, T., Majumdar, R., Ouaknine, J., Salamati, M., Soudjani,
  S., Worrell, J.: The pseudo-{Skolem} problem is decidable. In: Bonchi, F.,
  Puglisi, S.J. (eds.) 46th International Symposium on Mathematical Foundations
  of Computer Science, {MFCS} 2021. LIPIcs, vol.~202, pp. 34:1--34:21. Schloss
  Dagstuhl - Leibniz-Zentrum f{\"{u}}r Informatik (2021).
  \doi{10.4230/LIPIcs.MFCS.2021.34}

\bibitem{DCostaKMOSW22}
D'Costa, J., Karimov, T., Majumdar, R., Ouaknine, J., Salamati, M., Worrell,
  J.: The pseudo-reachability problem for diagonalisable linear dynamical
  systems. In: Szeider, S., Ganian, R., Silva, A. (eds.) 47th International
  Symposium on Mathematical Foundations of Computer Science, {MFCS} 2022.
  LIPIcs, vol.~241, pp. 40:1--40:13. Schloss Dagstuhl - Leibniz-Zentrum
  f{\"{u}}r Informatik (2022). \doi{10.4230/LIPIcs.MFCS.2022.40}

\bibitem{Haase18}
Haase, C.: A survival guide to {Presburger} arithmetic. {ACM} {SIGLOG} News
  \textbf{5}(3),  67--82 (2018). \doi{10.1145/3242953.3242964}

\bibitem{KannanL86}
Kannan, R., Lipton, R.J.: Polynomial-time algorithm for the orbit problem. J.
  {ACM}  \textbf{33}(4),  808--821 (1986). \doi{10.1145/6490.6496}

\bibitem{DBLP:conf/birthday/KarimovKO022}
Karimov, T., Kelmendi, E., Ouaknine, J., Worrell, J.: What's decidable about
  discrete linear dynamical systems? In: Raskin, J., Chatterjee, K., Doyen, L.,
  Majumdar, R. (eds.) Principles of Systems Design - Essays Dedicated to Thomas
  A. Henzinger on the Occasion of His 60th Birthday. Lecture Notes in Computer
  Science, vol. 13660, pp. 21--38. Springer (2022).
  \doi{10.1007/978-3-031-22337-2_2}

\bibitem{KarimovLOPVWW22}
Karimov, T., Lefaucheux, E., Ouaknine, J., Purser, D., Varonka, A., Whiteland,
  M.A., Worrell, J.: What's decidable about linear loops? Proc. {ACM} Program.
  Lang.  \textbf{6}({POPL}),  1--25 (2022). \doi{10.1145/3498727}

\bibitem{LoharJSDC21}
Lohar, D., Jeangoudoux, C., Sobel, J., Darulova, E., Christakis, M.: A
  two-phase approach for conditional floating-point verification. In: Groote,
  J.F., Larsen, K.G. (eds.) Tools and Algorithms for the Construction and
  Analysis of Systems - 27th International Conference, {TACAS} 2021, Part of
  {ETAPS} 2021. Part {II}. Lecture Notes in Computer Science, vol. 12652, pp.
  43--63. Springer (2021). \doi{10.1007/978-3-030-72013-1_3}

\bibitem{LucaOW22}
Luca, F., Ouaknine, J., Worrell, J.: Algebraic model checking for discrete
  linear dynamical systems. In: Bogomolov, S., Parker, D. (eds.) Formal
  Modeling and Analysis of Timed Systems - 20th International Conference,
  {FORMATS} 2022. Lecture Notes in Computer Science, vol. 13465, pp. 3--15.
  Springer (2022). \doi{10.1007/978-3-031-15839-1_1}

\bibitem{maurica2017optimal}
Maurica, F., Mesnard, F., Payet, E.: Optimal approximation for efficient
  termination analysis of floating-point loops. In: 2017 1st International
  Conference on Next Generation Computing Applications (NextComp). pp. 17--22.
  IEEE (2017)

\bibitem{minsky1967computation}
Minsky, M.L.: Computation. Prentice-Hall Englewood Cliffs (1967)

\bibitem{DBLP:conf/soda/OuaknineW14}
Ouaknine, J., Worrell, J.: Positivity problems for low-order linear recurrence
  sequences. In: Chekuri, C. (ed.) Proceedings of the Twenty-Fifth Annual
  {ACM-SIAM} Symposium on Discrete Algorithms, {SODA} 2014. pp. 366--379.
  {SIAM} (2014). \doi{10.1137/1.9781611973402.27}

\bibitem{SCHNEIDER20025}
Schneider, H.: Wielandt's proof of the exponent inequality for primitive
  nonnegative matrices. Linear Algebra and its Applications  \textbf{353}(1),
  5--10 (2002)

\bibitem{Tiwari04}
Tiwari, A.: Termination of linear programs. In: Alur, R., Peled, D.A. (eds.)
  Computer Aided Verification, 16th International Conference, {CAV} 2004.
  Lecture Notes in Computer Science, vol.~3114, pp. 70--82. Springer (2004).
  \doi{10.1007/978-3-540-27813-9_6}

\bibitem{XiaYZZ11}
Xia, B., Yang, L., Zhan, N., Zhang, Z.: Symbolic decision procedure for
  termination of linear programs. Formal Aspects Comput.  \textbf{23}(2),
  171--190 (2011). \doi{10.1007/s00165-009-0144-5}

\end{thebibliography}

\newpage
\appendix

\section{Undecidability of point-to-point reachability}
\label{sec:app_undec}
In this section we show that, in general, \cref{problem:reach} (and thus \cref{problem:modelchecking}) is undecidable. 

\thmundec*

We reduce the halting of a two-counter Minsky machine to the point-to-point reachability problem. We recall here the definition of this model:

\definitionminski*

Without loss of generality (by first using a zero test), one can assume a decrement 
operation is never used in a configuration where the would-be decreased counter has 
value $0$, hence removing the need to check whether $z>0$.

The halting problem asks whether, starting in configuration $(\ell_1,0,0)$, that is, in the distinguished starting state with both counters set to $0$, whether the state 
$\ell_m$ is reached. The problem is undecidable~\cite{minsky1967computation}.

We describe below the construction of an LDS with mantissa length $p=1$ and base $10$ that will simulate a run of this machine. In particular, our reduction will maintain that the mantissa always has the value $0$ or $1$ after rounding (although, as we operate in base 10, there are 10 possible values the mantissa could have taken)\footnote{Technically we define floating-point numbers with mantissa in $\{0\}\cup[0.1,1)$. For convenience, throughout this section we write $1\cdot 10^{c}$ (or simply $10^c$) instead of $0.1\cdot10^{c+1}$.}.
For ease of readability, we describe the LDS using variables to represent the dimensions
and linear functions to represent the transition matrix.

In order to describe our proof we introduce two filter functions which will help us encode control flow. 
The function $\operatorname{filter}_+(u,v)$ (resp. $\operatorname{filter}_-(u,v)$) is equal to $v$ if $v\geq u$ (resp. $v<u$) and to $0$ otherwise.
The two following results are shown in \cref{sec:undec}.

\lemfilter*

\corfilter*

\begin{remark}
The proof of~\cref{lem:filter} technically requires encoding $1.1$ in the matrix. The problem setting we consider does not specifically require the matrix to have the same precision as the program variables and so no special encoding is required. However, in case one wishes to impose such restriction, we observe that it is possible to encode the multiplication by $1.1$ using only  floating-point numbers with precision $1$ by splitting the computation into $1$ and $0.1$. To do this we introduce an additional program variable $temp3$ and one additional linear operation, that is, we let:

\begin{tabular}{p{1cm}ll}
&$temp$ & $\leftarrow u + v$\\
&$temp2$ & $\leftarrow temp - u$\\
&$temp3$ & $\leftarrow 1\cdot 10^{-1}temp2$\\
&$w$ & $\leftarrow temp2 + temp3$.
\end{tabular}

\end{remark}

We now show the encoding of the two-counter machine into a linear dynamical system with rounding using the defined filter functions, entailing \cref{thm:undec}. 

\begin{proof}[Proof of \cref{thm:undec}]

First, for each state $\ell_j$ of the Minsky machine, we build variables $x_j, y_j$, and $a_j$, which will have the following invariant property:
when the run of the Minsky machine reaches a configuration ($\ell_j,x,y)$, then in the
corresponding run of the LDS we will have that $x_j=10^x, y_j=10^{y}$ and $a_j=10^{x+y}$.
Moreover, for all $k\neq j$, $x_k=y_k=a_k=0$.

Assuming the variable $\alpha_k$ corresponds to dimension $k$ in
the LDS, setting $\alpha_i \leftarrow \sum_{k} a_k \cdot \alpha_k$ means
that $M_{i,k} = a_k$.
Any entries which are not specified are assumed to be zero. 
When describing update transitions, we also want to use the 
$\operatorname{filter}$ operations. Since these operations represent up to four 
linear steps, we create four copies of each variable as well as temporary variables
($temp$ and $temp2$) for each state of the Minsky machine.
These copies and temporary variables are used implicitly: we only describe here 
the updates of the primary variable; the updates of the secondary variables can be deduced 
from Lemma~\ref{lem:filter} and Corollary~\ref{cor:filter}. Moreover, if an update 
function takes fewer than four steps, we complete it with updates that pass the value to 
the next secondary variable in order for every update to take exactly four steps.

Let us now define the update functions of the primary variables of the system.

\begin{itemize}
\item if the transition in $\ell_i$ is an increment, $\operatorname{inc}_z(\ell_j)$, then we multiply by 10 
the variables to keep the invariant on the variables and move the values to the next instruction. For instance, if $x$ is increased, we set
$x_{j}\leftarrow  10 \cdot x_i$,
$y_{j}\leftarrow y_i$,
and $a_{j}\leftarrow 10  \cdot a_i$.

\item if the transition in  $\ell_i$ is a decrement,  $\operatorname{dec}_z(\ell_j)$
then conversely to incrementation,
we  divide  the variables by 10 to keep the invariant on the variables and move 
the values to the next instruction, we set
$x_{j}\leftarrow  10^{-1} \cdot x_i$,
$y_{j}\leftarrow y_i$,
and $a_{j}\leftarrow 10^{-1}  \cdot a_i$.

\item if the transition in  $\ell_i$ is a zero test of $z$, $\operatorname{zero?}_z(\ell_j,\ell_k)$, we 
need to copy the values of our variables only to the correct coordinate. For that, we need
a way to filter their values, depending on the zeroness of $z$.
Assume the test is on $x$ without loss of generality. We define the following operations:
\begin{itemize}
\item $x_j\leftarrow  \operatorname{filter}_-(10, x_i)$,
\item $x_k\leftarrow  \operatorname{filter}_+(10, x_i)$,
\item $y_j\leftarrow  \operatorname{filter}_+(a_i, y_i)$,
\item $y_k\leftarrow  \operatorname{filter}_-(a_i, y_i)$,
\item $a_j\leftarrow  \operatorname{filter}_-(10\cdot y_i, a_i)$
\item $a_k\leftarrow  \operatorname{filter}_+(10\cdot y_i, a_i)$.
\end{itemize}
One can check that the $j$ variables are assigned the values from the $i$ variables if and only if $x=0$ (and thus $x_i=10^0$ and $a_j = y_j$), and the $k$ variables are assigned the values in the opposite case.

\item if the state is $\ell_m$ we zero the system: the values of the counters received in $x_m, y_m$ and the test value $a_m$ are discarded on the next step, thus
making every value of the system equal to $0$.
\end{itemize}

We have that the Minsky machine terminates if and only if the LDS described by the above behaviour,
starting with $x_1=y_1=a_1=1$ (and everything else at $0$) eventually hits the zero
vector.

This equivalence is a direct result of the invariant kept within the construction on the 
variables $x_j$ and $y_j$. Indeed, if the Minsky machine terminates, then following the 
same path the LDS we constructed puts all the stored values in $a_m,x_m,y_m$ associated to the terminating state $\ell_m$, thus discarding them and reaching the zero vector in the next step. And reciprocally,
the zero vector can only be reached by discarding the stored variables thanks to 
a halting instruction, proving halting of the Minsky machine.

Hence, point-to-point reachability is undecidable for LDS under floating-point rounding.
\qed\end{proof}

This proof, and in particular the construction of the filter functions rely on the use of 
base $10$. It works as well for most other bases. 
A notable exception is base 2 as the gap between $2^c$ and $2^{c+1}$ is not large enough 
for the rounding functions to operate correctly. It's possible however to artificially widen 
 this gap by storing the value of the counter $x$ for instance 
as $2^{2x}$ instead of $2^x$ in the variables of the LDS. With the guarantee that the 
exponent is even, the filter functions, and the rest of the proof, work in base 2 as 
well.

\section{Proof of Proposition~\ref{prop:closeness}}%

\pcloseness*

\begin{proof}\hfill

\noindent (1) Let $x = q_110^\alpha, x' = q_210^\beta$, with $\abs{\alpha-\beta}\le \delta$ and 
$q_1$ and $q_2$ are non-zero mantissa with $p$ decimals.
Then, $10^{-p}\leq \frac{q_1}{q_2}\leq 10^p$.

Hence
$\frac{x}{x'}=\frac{q_110^\alpha}{q_2 10^\beta}\le \frac{q_110^{\beta+\delta}}{q_210^{\beta}}\le \frac{10^\delta}{ 0.1 \cdot} \le 10^{\delta + 1}$.

And $\frac{x}{x'}=\frac{q_110^\alpha}{q_210^\beta}\ge \frac{q_110^{\alpha}}{q_210^{\alpha-\delta}}\ge \frac{ 0.1 \cdot}{1\cdot10^\delta} \ge 10^{-\delta - 1}$.

\medskip

\noindent (2) Let $x = q_110^\alpha, x' = q_210^\beta$, with $  10^{-\delta} \le x/x' \le 10^\delta$. 

Then $\frac{0.1 \cdot 10^\alpha}{10^\beta}\le \frac{q_110^\alpha}{q_210^\beta}= \frac{x}{x'} \le 10^\delta$, so $10^{\alpha-\beta}\le 10^{\delta+1}<10^{\delta+2}$

and $\frac{ 10^\alpha}{0.1 \cdot10^\beta}\ge \frac{q_110^\alpha}{q_210^\beta}= \frac{x}{x'} \ge 10^{-\delta}$, so $10^{\alpha-\beta}\ge 10^{-\delta-1}>10^{-\delta-2}.$

\medskip

\noindent  (3)  If $x\approx_{\delta} x'$ and $x' \approx_{\eta} x''$ then 
$\frac{x}{x'}\le 10^{\delta+1}$ and $\frac{x'}{x''} \le 10^{\eta+1}$ by (1). Hence $\frac{x}{x''} \le 10^{\delta+\eta+2}$ and similarly $\frac{x''}{x} \le 10^{\delta+\eta+2}$. Hence by (2) we have $x \approx_{\delta+\eta+2+2} x''$.
\qed\end{proof}

\section{Proof of Lemma~\ref{lemma:zclosenessofsme}}

\zclosenesssme*
\begin{proof}
Let $C$ be such that $(M^C)_{\Sme,\Sfeeder}$ and $(M^C)_{\Sme}$ is positive 
(\emph{i.e.} there is a path in the graph associated to $M$ from each element of 
$\Sme\cup \Sfeeder$ to each element of $\Sme$).
This integer exists as, 
from \cref{lem:wietland}, there exists $C_i$ and $C_0$ such that $M^{C_i}_{F_i}$
and $M^{C_0}_{\Sme}$ are positive.
As the SCC $F_i$ feeds $\Sme$, the $M_{\Sme,F_i}$ are non-zero non-negative matrices, 
and in particular, there is a path of length $C_i+C_0+1$ (corresponding to the an element
of $M^{C_0}_{\Sme}M_{\Sme,F_i}M^{C_i}_{F_i}$) between any state of $F_i \cup \Sme$ to any 
state of $\Sme$.
Setting $C$ as the product of $C_0$ and of the $C_0 + C_i + 1$, we have that 
there is a path of length $C$ between any state of $\Sfeeder \cup \Sme$ to any 
state of $\Sme$.

Recall that $m$ is a constant larger than all entries of $M$ and $\frac{1}{m}$ is smaller than all non-zero entries of $M$. Recall $c$ is the constant such that $[\cdot]$ is log-bounded. Let $d$ be $|\Sfeeder \cup \Sme|$.

Let us bound the effect of $C$ steps, first from above, for all $u\in \Sme$, we have:
\begin{align*}
x_{(q,i)}^{(t+C)} &= \round{\sum_{r \in \Sfeeder \cup \Sme} M_{(q,i),(r,i-1)}x_{(r,i-1)}^{(t+C-1)} } 
\\&\le mcd \max_{(r,i-1): M_{(q,i),(r,i-1)} >0} x_{(r,i-1)}^{(t+C-1)}
\\ &\le (mcd)^2 \max_{(q,i-2) : M^2_{(q,i),(r,i-2)} >0} x_{(r,i-2)}^{(t+C-2)}
\\ &\le \ \dots\ \le (mcd)^{C} \max_{(r,i) : M^{C}_{(q,i),(r,i)} >0} x_{(r,i)}^{(t)} \tag{$i\equiv i+C \mod P$}
\\ &= (mcd)^{C} \max_{(r,i)  \in \Sfeeder \cup \Sme } x_{(r,i)}^{(t)}
\end{align*}

Similarly, bounding from below, for all $u\in \Sme$, we have:
\begin{align*}
x_{(q,i)}^{(t+C)} &\ge \frac{1}{mc} \max_{(r,i-1) : M_{(q,i),(r,i-1)} >0} x_{(r,i-1)}^{(t+C-1)}
\\ &\ge \frac{1}{(mc)^2} \max_{(r,i-2): M^2_{(q,i),(r,i-1)} >0} x_{(r,i-2)}^{(t+C-2)}
\\ &\ge\  \dots\ \ge\  \frac{1}{(mc)^{C}} \max_{(r,i) : M^{C}_{(q,i),(r,i)} >0} x_{(r,i)}^{(t)}\tag{$i\equiv i+C \mod P$}
\\ &= \frac{1}{(mc)^{C}} \max_{(r,i)  \in \Sfeeder \cup \Sme } x_{(r,i)}^{(t)}
\end{align*}

Finally we observe that for all $(q,i),(q',i)\in \Sme$ and $t\ge N + C$ we have
\[
\frac{x_{(q,i)}^{(t+C)}}{x_{(q',i)}^{(t+C)}} \le \frac{(mcd)^{C} \displaystyle\max_{(r,i)  \in \Sfeeder \cup \Sme } x_{(r,i)}^{(t)}}{\frac{1}{(mc)^{C}} \displaystyle\max_{(r,i)  \in \Sfeeder \cup \Sme } x_{(r,i)}^{(t)}} = (mc)^{2C}d^{C}.
\]
Hence, by \cref{prop:closeness}, we can select $\eta = \ceil{\log{(mc)^{2C}d^{C}}}+2$
and $N=C$.
\qed
\end{proof}

\section{Proof of Proposition~\ref{prop:ppiscloseness}}

\ppisclose*

\begin{proof}
For $t\ge N$ a pseudo-periodic vector can be expressed as $v_q^{(t)}=  m_q^{(t)} 
10^{\alpha_q^{(t)} + \gamma^{(t)}}$, where $m_q^{(t)}$ is periodic and comes from the finite set of mantissas expressible with $p$ digits, $\alpha^{(t)}_q$ is periodic and $\gamma^{(t)}$ comes from the growth rate and thus does not depend on $q$.

Thus at any given time step we have $\frac{v_q^{(t)}}{v_{q'}^{(t)}}  = \frac{m_q^{(t)}10^{\alpha_q^{(t)} + \gamma^{(t)}}}{m_{q'}^{(t)}10^{\alpha_{q'}^{(t)} + \gamma^{(t)}}} = \frac{m_q^{(t)}10^{\alpha_q^{(t)}}}{m_{q'}^{(t)}10^{\alpha_{q'}^{(t)}}}$. 
Since each of $m_{q},m_{q'},\alpha_{q},\alpha_{q'}$ comes from a finite set of attainable values, the ratio has a maximum $D$ as both $v_q^{(t)}, v_{q'}^{(t)}$ are non-zero. 
Hence 
$v_{q}^{(t)} \approx_\delta v_{q'}^{(t)}$ for $\delta = \lceil\log_{10}(D)\rceil +2$.
\qed
\end{proof}

\section{Proof of Lemma~\ref{lemma:B}}
\label{proofof:lemmaB}
\lemmaB*

\begin{proof}
Note that if $(r,i),(r',i) \in \Sme$ and $(r,i),(r',i) \in\Sfeeder$ then the claim follows from
\cref{lemma:zclosenessofsme} and \cref{prop:ppiscloseness} 
(applied on the sequence of values in the non-zero phase) respectively.

Assume now that $(r,i) \in\Sme$  and $(r',i)\in \Sfeeder$, we show that the claim follows by `transitivity'
of the closeness property (property (3) of \cref{prop:closeness}), due to the closeness within
$\Sfeeder$ and $\Sme$, as well as the closeness implied by the interference.

\noindent More formally,
\begin{itemize}
\item By \cref{prop:ppiscloseness}, as $\Sfeeder$ is pseudo-periodic there exists $\delta$ 
such that for all $t\in \mathbb{N}, (v,i),(v',i) \in \Sfeeder$, 
$x_{(v,i)}^{(t)} \approx_\delta x_{(v',i)}^{(t)}$. 

In particular $x_{(r',i)}^{(t)} \approx_\delta x_{(v,i)}^{(t)}$ for all $(v,i) \in \Sfeeder$.
\item 
By \cref{lemma:zclosenessofsme}, there exists $\eta$ and $K$  
such that for all $t\geq K, (s,i),(s',i) \in \Sme$, 
$x_{(s,i)}^{(t)} \approx_\eta x_{(s',i)}^{(t)}$.

In particular $x_{(r,i)}^{(t)} \approx_\eta x_{(s,i)}^{(t)}$ for all $(s,i) \in \Sme$.
\item Let $t\geq 1$, if $x_{(q,i)}^{(t)}$ is influenced by $(q',i-1)$, then
$\frac 1 m   \leq \frac{x_{(q,i)}^{(t)}}{x_{(q',i-1)}^{(t-1)}} \leq 2 m10^p$.
Moreover, as seen in the proof of \cref{lem:closetop},
$\frac{1}{mc10^{\delta+1}}\leq \frac{x_{(v,i)}^{(t)}}{x_{(q',i-1)}^{t-1}}\leq mcd 10^{\delta +1} $ for some $(v,i)\in \Sfeeder$ such that $M_{(v,i),(u,i-q)} > 0$.
Thus 
$\frac{1}{m^2c10^{\delta+1}} \leq \frac{x_{(q,i)}^{(t)}}{x_{(v,i)}^{(t)}} \leq 2 m^2cd 10^{p+\delta +1}$

Therefore, by setting $\zeta = \ceil{\log_{10}(2 m^2cd)} + p+\delta +3$, 
by property (2) of \cref{prop:closeness} we have 
$x_{(v,i)}^{(t)} \approx_{\zeta} x_{(q,i)}^{(t)}$ for $(q,i)\in \Sme$ and $(v,i)\in \Sfeeder$.
\end{itemize}

\noindent Thus we have  $x_{(r,i)}^{(t)} \approx_{\eta} x_{(q,i)}^{(t)} \approx_\zeta x_{(v,i)}^{(t)}  \approx_\delta x_{(r',i)}^{(t)}$ and by property (3) of \cref{prop:closeness} we have $x_{(r,i)}^{(t)} \approx_{\delta +\eta+ \zeta+8} x_{(r',i)}^{(t)}$.  Thus, the claim holds for $\beta = \delta +\eta+ \zeta + 12$ and
$N = K$.
\qed
\end{proof}

\section{Proof of Lemma~\ref{lemma:finitecase}}
\label{proofof:lemma:finitecase}
\aretheycloseagain*

\begin{proof}

Define a new dynamical system $(M_{\Sme}, y^{})$ such that $y^{} = x^{(t)}_{\Sme}$, with orbit $y^{(t)}$. The vector $y$ evolves without the influence of $\Sfeeder$. Since $y^{(t)}$ consists of a single strongly connected, then it is effectively pseudo-periodic, with starting point $N_y$, period $T_y$ and growth rate $\alpha_y$ for every $q\in \Sme$.

We consider two cases:

First, suppose $y^{(t')} \ne x^{(t+t')}_{\Sme}$ for $t' \le N_y + T\cdot T_y$. Then clearly a value of $x_{\Sfeeder}$ influenced the value of $x_{\Sme}$, and so they must have been close for some $t'$ and we're done.

Secondly, suppose $y^{(t')} = x^{(t+t')}_{\Sme}$ for $t' \le N_y +  T\cdot T_y$ then both $\Sfeeder$ and $\Sme$ completed a synchronised pseudo-period in which they did not interact. We now inspect the increase rate to see if they are converging, so that they will interact in the future, or diverging, so that they will not interact in the future.  This implies that we can detect within $K = N_y + T\cdot T_y$ steps whether $x_{\Sfeeder}^{(t')}$ is close to  $x_{\Sme}^{(t')}$ again. Note that this does entail that $t' \le t + K$ steps as it will take time for the convergence entailed by the growth rates bring them together.

We now analyse the number of steps require until they are close. Consider two states $q\in \Sfeeder$ and $q'\in \Sme$, we have observed that within the first $K$ steps used to determine they will be close again. The two systems can diverge by at most $10^\beta d(mc)^{2K_1}$ in this time (supposing one grows maximally and one reduces maximally at every step). Hence $(x^{(t+K_1)}_{\Sfeeder})_q \approx_\tau (x^{(t+K)}_{\Sme})_{q'}$, where $\tau \le \ceil{\log{10^p 10^\beta d(mc)^{2K}}}$.

Observe we have $\exponent((x^{(t+K)}_{\Sme})_{q'}) > \exponent((x^{(t+K)}_{\Sfeeder})_{q})$, otherwise $\Sfeeder$ would influence $\Sme$. 

By the increase rate of $(x^{(t)}_{\Sme})_{q'}$, every $T_y$ steps, the exponent changes by $\alpha_y$, that is, $\exponent((x^{(t+T_y)}_{\Sme})_{q'}) = \exponent((x^{(t)}_{\Sme})_{q'}) + \alpha_y $.

Similarly by the increase rate of $(x^{(t)}_{\Sfeeder})_{q}$, every $T$ steps, the exponent changes by $\alpha$, that is, $\exponent((x^{(t+T)}_{\Sfeeder})_{q}) = \exponent((x^{(t+T)}_{\Sfeeder})_{q}) + \alpha$.

We observe that exponents become closer at least every $T\cdot T_y$ steps:
\begin{align*}
\exponent((x^{(t+ T\cdot T_y)}_{\Sme})_{q'}) &- \exponent((x^{(t+T\cdot T_y)}_{\Sfeeder})_{q})
\\ &= \exponent((x^{(t)}_{\Sme})_{q'}) +\alpha_y T - \exponent((x^{(t)}_{\Sfeeder})_{q})- \alpha T_y
\\ &= \exponent((x^{(t)}_{\Sme})_{q'}) - \exponent((x^{(t)}_{\Sfeeder})_{q}) +\alpha_y T - \alpha T_y
\\ &< \exponent((x^{(t)}_{\Sme})_{q'}) - \exponent((x^{(t)}_{\Sfeeder})_{q}).
\end{align*}
The final inequality is because $\frac{\alpha_y}{T_y} < \frac{\alpha}{T}$ by the assumption that the exponents are converging. Since the difference reduces by at least one every $T\cdot T_y$ steps, we have $\exponent((x^{(t+K)}_{\Sme})_{q'}) - \exponent((x^{(t+K)}_{\Sme})_{q'})\le p$ within $\tau \cdot T\cdot T_y$ steps. Hence if there exists $t'$ such that $x_{\Sfeeder}^{(t')} \approx_\beta x_{\Sme}^{(t')}$, there exists $t' \le t+ K + \tau\cdot T \cdot T_y$.
\qed \end{proof}

\section{Proof of Lemma~\ref{lemma:eventuallyperiodic}}%

\begin{restatable}{lemma}{eventuallyperiodic}
\label{lemma:eventuallyperiodic}
Let $\target_1,\dots,\target_k$ be sets such that $\mathcal{Z}(\target_i)$ is semi-linear for each $1\le i\le k$. The characteristic word is eventually periodic.
\end{restatable}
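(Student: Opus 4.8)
The plan is to show that the characteristic word $w = w_1 w_2 \cdots$ over the alphabet $2^{\{1,\dots,k\}}$ is eventually periodic, using only the hypothesis that each $\mathcal{Z}(\target_i)$ is semi-linear. First I would recall that a one-dimensional semi-linear set $\mathcal{Z}(\target_i)$ can be written as a finite set $F_i \subseteq \mathbb{N}$ together with finitely many linear sets; by padding and using a common multiple, all linear sets across \emph{all} $i \in \{1,\dots,k\}$ can be taken to share a single period $p$ (the least common multiple of the individual periods). Correspondingly let $N_0$ be larger than every element of every finite part $F_i$ and larger than every base $b$ appearing in any linear set.

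Next I would argue that for $t \ge N_0$, membership $t \in \mathcal{Z}(\target_i)$ depends only on the residue $t \bmod p$. Indeed, for $t \ge N_0$ the finite parts $F_i$ are irrelevant (they are all below $N_0$), so $t \in \mathcal{Z}(\target_i)$ if and only if $t \equiv b$ for some base $b$ of a linear set of $\mathcal{Z}(\target_i)$ with period $p$; and since $t \ge N_0 > b$, this holds if and only if $t \bmod p$ equals $b \bmod p$ for some such $b$. Hence the set $\{ i \mid t \in \mathcal{Z}(\target_i) \}$, which is exactly the letter $w_t$, is a function of $t \bmod p$ alone once $t \ge N_0$. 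This gives $w_{t} = w_{t+p}$ for all $t \ge N_0$, i.e. $w$ is eventually periodic with threshold $N_0$ and period $p$; the prefix $w_1 \cdots w_{N_0-1}$ is arbitrary but finite.

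The only mild subtlety — and the step I would treat most carefully — is the normalisation of the semi-linear sets to a common period: one must check that rewriting a linear set $\{b + k q \mid k \in \mathbb{N}\}$ with a larger period $p$ (a multiple of $q$) only costs finitely many extra bases $b, b+q, \dots, b + (p/q - 1)q$, all of which can be absorbed while keeping $N_0$ finite, and that taking unions over the finitely many $i$ and finitely many linear sets preserves this. Everything else is bookkeeping. I would close by noting this is exactly what is needed downstream: an eventually periodic characteristic word is recognised by a (deterministic) Büchi automaton $A$ of size $O(N_0 + p)$, so model checking against $\phi$ reduces to the emptiness check $L(A) \cap \overline{L(\phi)} = \emptyset$, which is decidable.
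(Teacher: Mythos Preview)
Your proof is correct and essentially the same as the paper's: both normalise the semi-linear sets $\mathcal{Z}(\target_i)$ to a common period $p$ and a common threshold, then observe that for $t$ past the threshold the letter $w_t$ depends only on $t \bmod p$. The only cosmetic difference is that the paper first forms the sets $C_S = \bigcap_{i\in S}\mathcal{Z}(\target_i)\setminus\bigcup_{i\notin S}\mathcal{Z}(\target_i)$ and invokes Boolean closure of semi-linear sets before extracting the common period, whereas you work directly on each $\mathcal{Z}(\target_i)$ --- a slightly more elementary route to the same conclusion.
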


\begin{proof}
We show that the characteristic word is eventually periodic. Recall that the alphabet of $w$ is $2^{\{1,\dots,k\}}$.

Let $S\subseteq \{1,\dots, k\}$. By \cref{thm:semialgebraicissemilinear}, since each $\mathcal{Z}(Y_i)$ is semi-linear, observe that the set $C_S = \{i \mid w_i = S\} =\bigcap_{i\in S} \mathcal{Z}(\target_i) \setminus \bigcup_{i\not\in S}\mathcal{Z}(\target_i)$ is semi-linear.
Let $F_S$ be the finite set of $C_S$ and $p_S$ be the common period of it's linear-sets.
Let $F = \max \bigcup_{S\subseteq \{1,\dots, k\}} F_S$ and $p=\operatorname{lcm}_{S\subseteq \{1,\dots, k\}} p_S$.

The word $w$ can thus be represented using an automaton with a finite initial segment of length $F$ and a cycle of length $p$. In the finite initial segment, the $i$-th transition is uniquely labelled by the set S such that $i\in C_S$. The $j$-th character of the cycle is uniquely labelled by the unique S such that $F + j \in C_S$.
\qed \end{proof}

\end{document}